\definecolor{bluegray}{rgb}{0.4, 0.6, 0.8}
\definecolor{turquoise}{rgb}{0.2, 0.7, 0.6}
\definecolor{pinegreen}{HTML}{008B72}
\newcommand\iis[1]{{\color{olive}#1}}	%Igor's additions
\title{Equivariant {simplicial distributions and quantum} contextuality}
\author{Cihan Okay\footnote{cihan.okay@bilkent.edu.tr} }
\author{Igor Sikora\footnote{igor.sikora@bilkent.edu.tr}} 
\affil{{\small{Department of Mathematics, Bilkent University, Ankara, Turkey}}}
\date{\today
%{\footnotesize\em Technical preprint for the non-proceedings submission to QPL 2023.
%}
}
\DeclareMathOperator{\id}{id}
\begin{document}
  \maketitle 
  
\begin{abstract}
We introduce an equivariant version of contextuality with respect to a symmetry group, which comes with natural applications to quantum theory.
In the equivariant setting, we 
%show that there are
{construct} cohomology classes that can detect contextuality. 
This framework is motivated by the earlier topological approach {to contextuality} producing cohomology classes that serve as computational primitives in {measurement-based quantum computing.}
%specific quantum computation schemes.
\end{abstract}   
  
\tableofcontents

\section{Introduction}
 
{In quantum theory, measurement statistics is described by a family of probability distributions associated to quantum measurements that can be performed simultaneously.
Such a family of distributions is called contextual if a joint probability distribution cannot describe the measurement statistics.
The sheaf-theoretic approach of \cite{abramsky2011sheaf} is a natural framework to capture this notion.
More recently, a topological approach has been introduced in \cite{Coho} based on chain complexes and tailored for applications in measurement-based quantum computing.
Simplicial distributions introduced in \cite{okay2022simplicial} is a new framework that unifies these two earlier approaches and goes beyond by generalizing the notion of contextuality to  measurements and outcomes described by spaces rather than discrete sets.} 
In \cite{Coho}, topological techniques are first introduced to describe proofs of contextuality for (1) parity proofs based on algebraic relations among the observables and (2) symmetry-based proofs that rely on a transformation {group}  
acting on the observables. Moreover, the two kinds of proofs can be related using the theory of chain {complexes}.
This paper extends this analysis to simplicial distributions.
Our main contribution is the notion of equivariant contextuality with respect to a symmetry group {and the construction of cohomology classes, which can detect this type of contextuality. These cohomology classes are related to those introduced in \cite{Coho}, which play an important role} in quantum computational schemes such as measurement-based quantum computation \cite{raussendorf2016cohomological,Fraction2018} and quantum computation with magic states \cite{raussendorf2023role} for quantifying computational advantage. As a future direction, the authors plan to investigate the computational {significance} of the {cohomology classes} introduced in the equivariant setting.

The framework of simplicial distributions is based on simplicial sets, combinatorial models of spaces more expressive than simplicial complexes.
A simplicial distribution on a pair of simplicial sets $X$ and $Y$, which represent the measurements and outcomes, is given by a simplicial set map
$$
p:X\to D_R(Y)
$$ 
The simplicial set $D_R(Y)$ has simplices given by $R$-valued distributions on the set of simplices of $Y$, where $R$ is a semiring.
Any simplicial set map $s:X\to Y$ can be turned into a simplicial distribution $\delta^s=\delta_Y\circ s$,  called a deterministic distribution, by composing with $\delta_Y:Y\to D_R(Y)$ {that} sends a simplex of the outcome space to the delta-distribution peaked at that simplex.
A simplicial distribution is called contextual if it cannot be expressed as a probabilistic mixture of deterministic distributions. 
Given a group $G$ and {a pair of} simplicial $G$-sets {$(X,Y)$}, we introduce the notion of $G$-equivariant contextuality.
The basic idea is to require equivariance with respect to $G$ in the definitions of simplicial and deterministic distributions.

In quantum theory, commutation relation {among operators} can be used to define simplicial sets that are variations of the well-known nerve construction in algebraic topology. 
Such simplicial sets are called partial groups, {a notion} introduced in \cite{broto2021extension}.
A typical example is defined for the unitary group $U(\hH)$ of a finite dimensional Hilbert space $\hH$ and consists of the $n$-simplices given by
$$
N(\ZZ_d,U(\hH)) = \set{(A_1,A_2,\cdots,A_n):\, A_i^d=\one,\;\;A_iA_j=A_jA_i\; \forall i,j }.
$$
This space and its variants have been studied {in} %recently; see 
\cite{adem2015classifying,antolin2020classifying,
okay2021classifying}.
Identifying unitary operators that differ by an element in $\Span{e^{2\pi i/d}\one}$ gives a central partial group extension in the sense of \cite{broto2021extension}. 
%More generally, \is{I would delete "More generally", as it is unclear what we are more general to.} 
We develop our theory for a central partial group extension {of the form}
$$
NA \xrightarrow{i} E\xrightarrow{\pi} M
$$ 
{where $A$ is a finite abelian group and $NA$ is the  nerve space.}
{This} extension is classified by a cohomology class {$[\beta]\in H^2(M,A)$}. A simplicial distribution $p:E\to D_R(NA)$ is said to be relative to $i:NA\to E$ if the following diagram commutes
$$
\begin{tikzcd}[column sep=huge,row sep=large]
NA \arrow[d,"i"] \arrow[rd,"\delta_Y"] & \\
X\arrow[r,"p"] & D_R(NA)
\end{tikzcd}
$$ 
We show that if $[\beta]\neq 0$, then every such simplicial distribution is contextual (Corollary \ref{cor:beta nonzero then contextual}).
In the equivariant setting, we consider a group $G$ that acts on $E$ by partial group automorphisms fixing $NA$. Our main tool {in the equivariant setting} is the Borel construction, which sends a simplicial $G$-set $X$ to the simplicial set given by $X\hquo G = EG\times_G X$. {Applying this construction} to the partial group extension gives another partial group extension
$$
NA \to E\hquo G \to M\hquo G
$$ 
Let $[\beta_G]\in H^2(M\hquo G,A)$ {denote} the cohomology class 
{classifying this}
%corresponding to the 
extension.
On the other hand, we can think of the Borel construction as the diagonal of the bisimplicial set $S_G(X)$ whose $(p,q)$-simplices are given by $(EG)_p \times_G X_q$.  
The total complex of the associated double complex of cochains on this bisimplicial set provides an alternative way of studying the cohomology class $[\beta_G]$.
In the total complex, we have two important cocycles: 
First of all, $\beta$ can be regarded as a $(0,2)$-cocycle.
Given a pseudo-section $\eta:M\to E$  of $\pi$ (similar to a simplicial set map except that it fails to be compatible with the $d_0$ face), we have a $(1,1)$-cocycle $\Phi:G\times M_1\to A$ defined by 
$$
\Phi_g(x) = (g\cdot \eta(g^{-1} \cdot x)) \cdot \eta(x)^{-1}.
$$
Our main result is the following (Theorem \ref{thm:betaG zero iff there exists s iff equivariant section} and Corollary \ref{cor:beta-G nonzero then G-equiv contextuality}): 
\begin{thm*} 
Let $NA \xrightarrow{x} E \xrightarrow{\pi} M$ be a central partial group extension and $G$ be a group acting on $E$ {by} partial group {automorphisms}   that fix $NA$.
The following are equivalent:
\begin{enumerate}
\item The class $[\beta_G]$ is zero.  
\item There exists $s:M_1\to A$ such that $d^vs=\beta$ and $d^hs=\Phi$.
\item The map $\pi$ admits a $G$-equivariant section.
\end{enumerate} 
{Moreover, if $[\beta_G]\neq 0$ then every $G$-equivariant distribution $p:E\to D_R(NA)$ relative to $NA$ is $G$-equivariantly contextual.}
\end{thm*}

There is a counterpart to the partial group approach, which uses cofiber sequences of spaces. We develop this approach in Section \ref{sec:equivariant distributions} and prove a result analogous to the theorem above (Theorem \ref{thm:gammaG zero then beta and phi trivial}).
Equivariant distributions naturally arise in quantum theory.  
The nerve space $N(\ZZ_d,U(\hH))$ can be regarded as the total space of a partial group extension and a quantum state (density operator) can be used to obtain a simplicial distribution relative to the fiber, which is $N\ZZ_d$. 

The paper is organized as follows. In Section \ref{sec:equivariant distributions}, we introduce the notion of equivariant distributions and equivariant contextuality. We give a simple example, torus with the involution action, to discuss these basic notions (Section \ref{sec:torus-with-involution}). We also introduce distributions on the simplicial circle.
Section \ref{sec:Distributions relative to a subspace} is about the relative version of contextuality and mainly focuses on introducing the cohomological notions. The total complex of the Borel construction is studied in this section. We end {the} section by elaborating on the torus example.
Quantum distributions are discussed in Section \ref{sec:quantum distributions}. We introduce 
%the 
partial group extensions and group actions on them. Our main theorem {stated above} is proved in this section. The counterpart to the torus example is the nerve of the dihedral group, also discussed in this section. We show how quantum theory naturally gives simplicial distributions relative to a subspace. We finish with the 
%famous 
Mermin star construction, {a well-known example of a contextual scenario} introduced in \cite{mermin1993hidden}, analyzed in the equivariant setting. {Our methods extend} 
%improving 
the {earlier} topological study {presented} in \cite{Coho} by {demonstrating that this construction produces equivariantly contextual simplicial distributions}.

\paragraph{Acknowledgments.}
This work is supported by the US Air Force Office of Scientific Research under award number FA9550-21-1-0002 and the Digital Horizon Europe project FoQaCiA, GA no. 101070558. 

\section{Equivariant distributions}
\label{sec:equivariant distributions}
 
Simplicial distributions are first introduced in \cite{okay2022simplicial} as a framework to study contextuality. 
In this section we introduce 
%the theory of 
equivariant  simplicial distributions {and equivariant contextuality}.

\subsection{Simplicial distributions}
\label{sec:simplicial distributions}

Let $R$ be a commutative zero-sum-free semiring, {i.e., $a+b=0$ implies $a=b=0$ for all $a,b,\in R$.}
The {\it distribution monad}  $D_R:\catSet \to \catSet$ is defined as follows:
	\begin{itemize}
		\item  The set $D_R(U)$ of $R$-distributions on a set $U$ consists of  functions $p:U\to R$ of finite support{, i.e., $|\set{u\in U:\, p(u)\neq 0}|<\infty$,} 
		%\is{As above - maybe it would be useful to explain this.}
		%, i.e., $|\set{x\in X:\, p(x)\neq 0}|<\infty$, 
		such that $\sum_{u\in U} p(u)=1$.
		\item Given a function $f:U\to V$ the function $D_R(f):D_R(U)\to D_R(V)$ is defined by
		$$
		p \mapsto \left( v\mapsto \sum_{u\in f^{-1}(v)} p(u) \right). 
		$$  
	\end{itemize} 

Let $\catDelta$ denote the simplex category with objects $[n]=\set{0,1,\cdots,n}$ where $n\geq 0$ and morphisms given by ordinal maps $\theta:[m]\to [n]$. This category is generated by morphisms of the form
\begin{itemize}
\item coface maps $d^i:[n-1]\to [n]$ that skip $i$ in the target, and
\item codegeneracy maps $s^j:[n+1]\to [n]$ that has a double preimage at $j$.
\end{itemize}
A simplicial set is a functor 
$$
X:\catDelta^\op\to \catSet
$$
%\is{Note - I added a stop at the end of the equation.}
A simplicial set map $f:X\to Y$ between two simplicial sets is a natural transformation. We will denote the category of simplicial sets by $\catsSet$.
Alternatively, 
%\is{comma added here} 
a simplicial set is a sequence of sets $X_n$, where $n\geq 0$, together with the face $d_i$ and the degeneracy maps $s_j$ dual to the coface $d^i$ and the codegeneracy $s^j$ maps.
In this description a simplicial set map consists of a family of functions $\set{f_n:X_n\to Y_n}_{{\geq 0}}$ 
%\is{what is in the subscript here?} 
compatible with the face and degeneracy maps. Given $\sigma\in X_n$ we will write 
$f_\sigma\in Y_n$ for the value of the function $f$ on $\sigma$.
The standard $d$-simplex $\Delta[d]$ is the simplicial set with $n$-simplices given by the set of ordinal maps $[n]\to [d]$. In other words, this is the functor represented by $[d]$.
We will write $\sigma^{01\cdots d}$ for the $d$-simplex in $\Delta[d]$ corresponding to the identity map $\idy:[d]\to [d]$. Observe that every other $n$-simplex can be obtained from $\sigma^{01\cdots n}$ by an application of face and degeneracy maps.

The functor $D_R$ extends to a monad on simplicial sets 
$$
D_R:\catsSet \to \catsSet
$$	
sending a simplicial set $X:\catDelta^\op \to \catSet$ to the composite functor $D_R(X):\catDelta^\op \xrightarrow{X} \catSet \xrightarrow{D_R} \catSet$; see \cite{kharoof2022simplicial}.
We will write $\delta_X:X\to D_R(X)$ for the unit of this monad {that sends $x$ to the delta-distribution with a peak at $x$}. 
%\is{Maybe it would be useful to describe this unit here?}

\Def{\label{def:simplicial distributions}
Let $X$ and $Y$ be simplicial sets.
A \emph{simplicial distribution} on $(X,Y)$ is a simplicial set map $p:X\to D_R(Y)$. 
%For a simplex $\sigma\in X_n$ we will write $p_\sigma$ for the distribution given by $p_n(\sigma)\in D_R(Y_n)$.
A \emph{deterministic distribution} is a simplicial distribution of the form $\delta^s:X\xrightarrow{s} Y \xrightarrow{\delta_Y} D_R(Y)$ where $s:X\to Y$ is a simplicial set map. 
%\is{I added emphases on simp.ditribution and det.distribution.}
We write $\sDist(X,Y)$ and $\dDist(X,Y)$ for the sets of simplicial and deterministic distributions, respectively.
There is a comparison map
$$
\Theta: D_R(\dDist(X,Y)) \to \sDist(X,Y)
$$
that sends a distribution $d=\sum_r d(r)\,\delta^r$ 
to the simplicial distribution $\Theta(d):X\to D_RY$  defined by 
$$
\Theta(d)_\sigma: \theta \mapsto \sum_{r\,:\,r_\sigma=\theta} d(r) 
$$ 
where $\sigma\in X_n$, $\theta\in Y_n$ and $r:X\to Y$ runs over {simplicial set maps} %\is{Note: this notion wasn't used before} 
such that $r_\sigma=\theta$.
A simplicial distribution is called {\it non-contextual} if it lies in the image of $\Theta$. Otherwise, it is called {\it contextual}. %\is{How about addimg emphasis on "contextual" and "non-contextual"?}
}

{

\Ex{\label{ex:simp-dist}
{\rm
Let us consider simplicial distributions of the form $p:\Delta[2]\to D_R(N\ZZ_2)$. Such a simplicial set map is determined by a distribution $p_\sigma\in D_R(\ZZ_2^2)$ where $\sigma=\sigma^{012}$.
Writing $p^{ab}$ for the values $p_\sigma(a,b)$ this distribution is specified by a tuple $(p^{00}, p^{01}, p^{10}, p^{11})\in R^4$ such that $\sum_{a,b}p^{ab}=1$.
A deterministic distribution $\delta^s$ for a simplicial set map $s:\Delta[2]\to N\ZZ_2$ will be determined by a delta-distribution at $s_\sigma=(a,b)\in \ZZ_2^2$. We write $\delta^{ab}$ for this simplicial distribution. Then any simplicial distribution $p$ can be written as
$$ 
p = \sum_{a,b} p^{ab} \delta^{ab},
$$
and hence is non-contextual. More complicated simplicial distributions can be constructed by gluing $2$-simplices. In this case additional relations will be introduced as the faces. The distribution $p_{d_i\sigma}$ associated to a face of $\sigma$ will be an element of $D_R(\ZZ_2)$. We will write $(p_{d_i\sigma})^a$ to denote its value as $a\in \ZZ_2$. We have $(p_{d_i\sigma})^0+(p_{d_i\sigma})^1=1$. Then
$$
(p_{d_i\sigma})^0 = \left\lbrace
\begin{array}{ll}
p^{00}+p^{10}  & i=0 \\
p^{00}+p^{11}  & i=1 \\
p^{00}+p^{01}  & i=2. 
\end{array}
\right.
$$
When two $2$-simplices are glued at a common face, say the $i$-th face of $\sigma_1$ and the $j$-th face of $\sigma_2$, then we have
$$
d_ip_{\sigma_1} = p_{d_i\sigma_1} = p_{d_j\sigma_2} = d_jp_{\sigma_2}.
$$  
This way we can construct contextual simplicial distributions. See \cite{okay2022simplicial} and \cite{kharoof2023topological} for examples.
}}

}

%For $X=\Delta[n]$ the map $\Theta$ is the identity map. 
%\comm{say more on $\Theta$}

\subsection{Partial groups}

%Let $e^i:[1]\to [n]$ denote the $i$-th spine map that sends $\set{0,1}$ to $\set{i-1,i}$ and $\Pi^n:[1]\to [n]$ denote the function with image $\set{0,n}$. 

In this section we recall the notion of partial monoids and groups from \cite{broto2021extension}.

We need two ordinal maps:
\begin{itemize}
\item The $i$-th edge map $e^i:[1]\to [n]$ that sends $\set{0,1}$ to $\set{i-1,i}$.
\item The $n$-th multiplication map $\Pi^n:[1]\to [n]$  that sends $\set{0,1}$ to $\set{0,n}$. 
\end{itemize}  
 
\Def{\label{def:partial monoid}
A {\it partial monoid} is a non-empty simplicial set $M$ such that
\begin{itemize}
\item $M$ is reduced, i.e., $M_0=\set{\ast}$,
\item The $n$-th spine map 
$$
\ee_n=(e_1,\cdots,e_n): M_n\to \underbrace{M_1\times \cdots \times M_1}_{n}
$$
is injective for all $n\geq 1${, where $e_i:M_n\to M_1$ is induced by the ordinal map $e^i$.}
\end{itemize} 
A homomorphism of partial monoids is a simplicial set map. 
} 
%\is{As before - a passage from the ordinal maps in the simplex category to the actual maps in the simplicial set is needed}

A simplex $x\in M_n$ can be identified with the tuple $\ee_n(x)=(x_1,\cdots,x_n)$. 
%\is{I reaiterate the comment: I think it might be useful to mention that we are using different notation than Broto-Gonzalez}
{Let $\Pi_n:M_n\to M_1$ denote the map induced by the ordinal map $\Pi^n$.}
The simplex $\Pi_n(x)\in M_1$ will also be denoted by $x_1\cdot x_2  \cdots  x_n$ to emphasize that this element represents the product. 
The degenerate simplex $s_0(\ast)\in M_1$ will be denoted by $1$ since it serves as the identity of the product.

 \begin{lem}
 [\!\cite{broto2021extension}]
\label{lem:maps into partial monoids and homotopy} 
Let $X$ be a simplicial set, and $M$, $N$ be a partial monoids.
\begin{enumerate}
\item A simplicial set map $f:X\to M$ is determined by $f_1:X_1\to M_1$.
\item A simplicial homotopy $F:N\times \Delta[1] \to M$ from $g$ to $f$ is determined by $\theta=F_1(1,\sigma^{01})\in M_1$ that satisfies $\theta\cdot g_1(x) = f_1(x)\cdot \theta$ for all $x\in N_1$.
\end{enumerate} 
\end{lem}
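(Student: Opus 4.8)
The plan is to build everything on the single structural fact that distinguishes a partial monoid: the spine map $\ee_n\colon M_n\to M_1^{\times n}$ is injective, so that an $n$-simplex of $M$ is completely recovered from its list of edges.

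For part (1), I would proceed one simplicial level at a time. Since $M_0=\{\ast\}$, the map $f_0$ is forced. For $n\ge 1$ and $x\in X_n$, naturality of the simplicial map $f$ with respect to each edge map $e^i\colon[1]\to[n]$ gives $e_i\bigl(f_n(x)\bigr)=f_1(e_i x)$, where on the right $e_i\colon X_n\to X_1$ is the edge map of $X$. Hence the spine of $f_n(x)$ is the tuple $\bigl(f_1(e_1 x),\dots,f_1(e_n x)\bigr)$, which is manufactured out of $f_1$ and the (known) simplicial structure of $X$. Injectivity of $\ee_n$ then pins down $f_n(x)$ itself, so $f_1$ determines every $f_n$.

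For part (2), the computation I would carry out lives over a single $1$-simplex $x\in N_1$: the square $x\times\sigma^{01}$ in $N\times\Delta[1]$ is the union of the two nondegenerate $2$-simplices $L=(s_1 x,\,s_0\sigma^{01})$ and $U=(s_0 x,\,s_1\sigma^{01})$. I would compute their faces from the simplicial identities, using that $N$ is reduced so that both endpoints of $x$ collapse and the outer edges become the identity $1=s_0(\ast)$; this is what makes $F_1(1,\sigma^{01})=\theta$ appear on a face of each triangle. Feeding these into the product description of $2$-simplices of a partial monoid, namely $d_2 w=w_1$, $d_0 w=w_2$ and $d_1 w=\Pi_2(w)=w_1\cdot w_2$ for $w\in M_2$, I note that applying $F$ to $L$ reads the diagonal value $F_1(x,\sigma^{01})$ as one product involving $\theta$ and the value of $F$ on the degenerate edge at one endpoint, while applying $F$ to $U$ reads it as the product taken in the opposite order involving $\theta$ and the value at the other endpoint; one side produces $f_1(x)\cdot\theta$ and the other $\theta\cdot g_1(x)$. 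Equating the two expressions for $F_1(x,\sigma^{01})$ yields the compatibility relation $\theta\cdot g_1(x)=f_1(x)\cdot\theta$, and either expression shows that $F_1$ on each of the three types of $1$-simplex of $N\times\Delta[1]$ is determined by $\theta$ together with the prescribed endpoints $f$ and $g$; part (1) then promotes this to the statement that $F$ is determined by $\theta$.

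The genuinely routine part is the face bookkeeping for $L$ and $U$. The one place that needs care—and the only real obstacle—is the orientation convention hidden in the phrase ``homotopy from $g$ to $f$'': which of the two restrictions of $F$ to $N\times\{\text{vertex}\}$ is declared to be $f$ and which is $g$ decides, via the two prism triangles, on which side $\theta$ multiplies. I would fix this convention at the outset precisely so that the triangles reproduce the stated relation $\theta\cdot g_1(x)=f_1(x)\cdot\theta$ rather than its mirror image.
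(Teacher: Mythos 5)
Your proposal is correct. The paper gives no proof of this lemma at all---it is quoted directly from \cite{broto2021extension}---so there is nothing to compare against; your argument (spine injectivity plus naturality along the edge maps $e^i$ for part (1), and the prism decomposition of $x\times\sigma^{01}$ into the two triangles $(s_1x,s_0\sigma^{01})$, $(s_0x,s_1\sigma^{01})$ with the reducedness of $N$ forcing the vertical edges to be $(1,\sigma^{01})$ for part (2)) is exactly the standard one, and your explicit flagging of the orientation convention hidden in ``from $g$ to $f$'' is precisely the point that decides which side $\theta$ multiplies on.
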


%\is{Shouldn't there be a reference for this lemma?}
{Part (1) of this lemma justifies saying that $f:X\to M$  is defined by $f(x)=m$ on $1$-simplices. Note that we also drop the subscript from $f_1$.}
We write $f\xrightarrow{\theta} g$ for the homotopy from $f$ to $g$.
Every monoid can be regarded as a partial monoid by the nerve construction.

\Def{ \label{def:nerve space}
For a monoid $M$ the {\it nerve space} $N(M)$  is the simplicial set whose set of $n$-simplices is given by $M^n$ with the following simplicial structure:
$$
\begin{aligned}
d_i(m_1,m_2,\cdots,m_n) &= \left\lbrace \begin{array}{ll}
(m_2,m_3,\cdots,m_n)   &  i=0\\
(m_1,\cdots,m_i\cdot m_{i+1} ,\cdots,m_n) & 0<i<n\\
(m_1,m_2,\cdots,m_{n-1}) & i=n
\end{array}
\right. \\
s_j(m_1,m_2,\cdots,m_n) &= (m_1,\cdots,{m_{j}, 1_M,m_{j+1}},\cdots,  m_n)\;\;\, 0\leq j\leq n,
\end{aligned}
$$
%\is{Too many degeneracies? Should be $0\leq j<n$?}
where $1_M$ is the identity element.
}

\Cor{\label{cor:maps to NM}
A simplicial set map $f:X\to NM$ is given by a function $f_1:X_1\to M$ such that 
$$
f_1(d_1\sigma) = f_1(d_2\sigma)\cdot f_1(d_0\sigma)
$$
for every $\sigma\in X_2$.
} 
\Proof{{This can be obtained from Lemma \ref{lem:maps into partial monoids and homotopy} by working out the simplicial structure. A more direct approach is to use} 
%This is a special case of 
the fact that the nerve of a category is $2$-coskeletal {\cite[Cor. 1.2]{joyal2008theory}}.
%\comm{cite}.
}

\Def{\label{def:partial group}
An inversion in a partial monoid $M$ is a simplicial set map $\nu:M\to M^\op$ such that for every $x\in M_n$ with $n\geq 1$ there is a simplex $(\nu(x),x)\in M_{2n}$ such that $\Pi_{2n}(\nu(x),x)=1$.
A partial monoid with an inversion is called a \emph{partial group}. 
%\is{Added emphasis}
}

A group $G$ can be regarded as a partial group by the nerve construction.

\subsection{Simplicial $G$-sets}

By a simplicial $G$-set we mean a simplicial object in the category of $G$-sets.

We define a simplicial $G$-set $EG$ whose $n$-simplices are given by  $(EG)_n=G^{n+1}$ and the simplicial structure maps are as follows:
$$ 
d_i(g_0,g_1,\cdots,g_n) = \left\lbrace
\begin{array}{cc}
(g_0,\cdots,g_ig_{i+1},\cdots,g_n)   & 0\leq i {<} n \\
(g_0,g_1,\cdots,g_{n-1})     &   i=n
\end{array}
\right. 
$$
and $s_j(g_0,g_1,\cdots,g_n) = (g_0,\cdots,g_{j},1,g_{j+1},\cdots,g_n)$ {where $0
\leq j\leq n$.}
%\is{For degenaracies, for completeness I would add $j<n$.}
The group action is multiplication from the left on the first coordinate in each level. The quotient map {gives} the universal principal $G$-bundle:
$$
G\to EG\to NG
$$
%\is{Added period in the equation. I think, however, that the last sentence needs rephrasing - that the quotient map gives the principal $G$-bundle.}

\Def{\label{def:Borel construction}
Let $X$ be a simplicial $G$-set. 
The {\it Borel construction} of $X$ is the simplicial set $X\hquo G$, also denoted by $EG\times_G X$, whose $n$-simplices are given by $EG_n\times_G X_n$ with the diagonal simplicial structure maps. 
}

Let $f:X\to Y$ be a simplicial $G$-map.
%$G$-equivariant simplicial set map between two simplicial $G$-sets. 
There is an associated map between the Borel constructions that makes the following diagram commute:
$$
\begin{tikzcd}[column sep =huge, row sep=large]
EG\times X \arrow[r,"\idy\times f"] \arrow[d] & EG\times Y \arrow[d] \\
X\hquo G \arrow[r,"f_G"] & Y\hquo G  
\end{tikzcd}
$$
where the vertical maps are the quotients under the $G$-action.
%$$
% f_G: X\hquo G \to Y\hquo G
%$$
% obtained from $\idy \times f: EG\times X \to EG\times Y$ by taking the map between the $G$-orbits under the diagonal action. 
When $G$ acts trivially on $X$, the Borel construction can be identified with $X\hquo G = NG\times X$.
Assume that the action of $G$ on $Y$ is trivial.
We will consider the composition 
\begin{equation}\label{eq:Borel construction and projection}
\tilde f_G: X \hquo G \xrightarrow{f_G} Y \hquo G =  NG \times Y \xrightarrow{\pi_2} Y 
\end{equation} 

Applying the Borel construction to $X\to \ast$ gives a fibration sequence
$$
X {\xrightarrow{\iota_X}} X\hquo G \to NG
$$ 
where $\ast\hquo G$ is identified with $NG$.
The first map in the sequence is the canonical inclusion that sends $x\in X_n$ to the simplex $[(1,\cdots,1),x]$. Note that this is a simplicial $G$-map, where the target has the trivial action.

\subsection{Equivariant distributions}

Let $U$ be a $G$-set. We equip $D_R(U)$ with the following $G$-action:
$$
g\cdot p(u) = p(g^{-1}\cdot u).
$$
{Here the action is induced by the standard $G$-action on the set of all functions $U\to R$, {where the action on the target is trivial}.}
%\iis{We note here that this is the conjugation action on a mapping space, where the action on the domain is trivial.}
This extends to simplicial $G$-sets. That is, if $Y$ is simplicial $G$-set then $D_R(Y)$ is a simplicial $G$-set with the action above in each degree.

\Def{\label{def:equivariant contextuality}
Let $X$ and $Y$ be simplicial $G$-sets.
A {\it $G$-equivariant simplicial distribution} on $(X,Y)$ is a simplicial $G$-map
$$
p: X\to D_R(Y)
$$
A {\it $G$-equivariant deterministic distribution} is a simplicial distribution of the form $\delta^s: X\xrightarrow{s} Y \xrightarrow{\delta} D_R(Y)$ where $s:X\to Y$ is a simplicial $G$-map.
We will write $\sDist_G(X,Y)$ and $\dDist_G(X,Y)$ %\is{different fonts are used for notations on these objects} 
for the sets of $G$-equivariant simplicial and deterministic distributions, respectively. A $G$-equivariant simplicial distribution $p$ is called  {\it $G$-equivariantly contextual} if it does not lie in the image of
$$
\Theta_G: D_R(\dDist_G(X,Y)) \to \sDist_G(X,Y)
$$
Otherwise, it is called {\it $G$-equivariantly non-contextual}. 
}

We will be interested in the case where the $G$-action on $Y$ is trivial. In this case we can apply the Borel construction, as described in (\ref{eq:Borel construction and projection}), to a $G$-equivariant distribution $p:X\to D_R(Y)$ and obtain a simplicial distribution
\begin{equation}\label{dia:tilde p_G construction}
\tilde p_G: X\hquo G \to D_R(Y)
\end{equation}
This construction can be applied to a simplicial  $G$-map $s:X\to Y$ and the corresponding equivariant distribution $\delta^s:X\to D_R(Y)$.

\Lem{\label{lem:commutativity property of Borel construction} 
We have $\widetilde {\delta_G^s} = \delta^{\tilde s_G}$.
}
\Proof{
We want to show that $\tilde \delta_G^s$ factors as $\delta\circ \tilde s_G$. By construction we have
$$
\tilde \delta_G^s[(1,g_1,\cdots,g_n),x] = \delta^{s(x)} = \delta^{\tilde s[(1,g_1,\cdots,g_n),x]}.
$$
}

The construction in   (\ref{eq:Borel construction and projection}) gives a commutative diagram
\begin{equation}\label{dia:commutative diagram equivariant vs Borel}
\begin{tikzcd}[column sep=huge,row sep =large]
D_R(\dDist_G(X,Y)) \arrow[d] \arrow[r,"\Theta_G"] & \sDist_G(X,Y) \arrow[d] \\
D_R(\dDist(X\hquo G,Y)) \arrow[r,"\Theta"] \arrow[u,bend left,dashed] & \sDist(X\hquo G, Y)  \arrow[u,bend right,dashed]
\end{tikzcd}
\end{equation}
where the left vertical map sends $\delta^s$ to $\delta^{\tilde s_G}$, which is also equal to $\widetilde {\delta_G^s}$ by Lemma \ref{lem:commutativity property of Borel construction}, and the right vertical map is defined by $p\mapsto \tilde p_G$.
The dashed arrows are simply obtained by restricting along the inclusion map $X\to X\hquo G$. The composition of the vertical downward map with the upward map is the identity.

\Pro{\label{pro:Equivariant contextual iff Borel contextual}
A $G$-equivariant simplicial distribution $p$ is $G$-equivariantly contextual if and only if $\tilde p_G$ is contextual.
}
\Proof{
Assume that $p$ is $G$-equivariantly non-contextual. Then by the commutativity of Diagram (\ref{dia:commutative diagram equivariant vs Borel}), with the downward vertical arrows, we see that $\tilde p_G$ is non-contextual. Conversely, if $\tilde p_G$ is non-contextual then the same diagram with the upward vertical arrows this time shows that $p$ is non-contextual. 
}

\subsection{Distributions on the circle}

%Let $S^1$ denote the simplicial circle, that is, $\Delta[1]/\partial \Delta[1]$. 

The set of $n$-simplices of the simplicial interval $\Delta[1]$ is given by 
$$
\Delta[1]_n = \set{\theta^{i}|\; 0\leq i\leq n+1}
$$
where $\theta^i:[n]\to [1]$ is the ordinal map such that $|(\theta^i)^{-1}(0)|=i$. The simplicial structure maps are given by
$$
d_j(\theta^i) = \left\lbrace
\begin{array}{ll}
\theta^{i-1} &  j < i \\
\theta^i  & i\leq j 
\end{array}
\right.
\;\;\;\text{ and }
\;\;\;
s_j(\theta^i) = \left\lbrace
\begin{array}{ll}
\theta^{i+1} & j<  i \\
\theta^{i} & i\leq  j. 
\end{array}
\right.
$$
%\is{I would think about replacing ampersands with the word "and". Not that it's unclear, but I haven't seen such notation often.}
The boundary $\partial \Delta[1]$ consists of the simplicial subset   given by $\set{\theta^0,\theta^{n+1}}$ in each level. The simplicial structure maps are just the identity maps. The simplicial circle $S^1$ is defined to be the simplicial set obtained from $\Delta[1]$ by collapsing its boundary $\partial \Delta[1]$. So $\theta^0$ and $\theta^{n+1}$ are identified to a single point, which we denote by $\star$.
Therefore the set of $n$-simplices is given by
$$
(S^1)_n = \left\lbrace
\begin{array}{ll}
\set{\star} & n=0 \\
\set{\star, \theta^1,\cdots,\theta^n} & n\geq 1.
\end{array}
\right.  
$$ 
The simplicial structure maps on $\theta^i$ are given by
$$
d_j(\theta^i) = \left\lbrace
\begin{array}{ll}
\theta^{i-1} & j< i{\text{ and }}  1<i \\ 
\theta^{i} & i\leq j {\text{ and }}  i<n \\
\star & \text{otherwise}
\end{array}
\right.
\;\;\;\text{ and }
\;\;\;
s_j(\theta^i) = \left\lbrace
\begin{array}{ll}
\theta^{i+1} & j< i \\ 
\theta^{i} & i\leq j.\\
\end{array}
\right.
$$ 
Any simplicial structure map sends $\star$ to itself. 

{A zero-sum-free semiring $R$ admits a partial order: For $a,b\in R$ we write $a\leq b$ if there exists $c\in R$ such that $a+c=b$. In the next result we use this partial order.}

\Pro{\label{pro:distributions on S1}
There is an isomorphism of simplicial sets
$$
D_R(S^1) \to N_{\leq 1}(R)
$$
where $ N_{\leq 1}(R)$ is the simplicial subset of the nerve space $N(R)$ of $(R,+)$, consisting of tuples whose sum is at most $1$. 
%\is{We need order to say this; it can be the one we discussed recently, that $a\leq b \Leftrightarrow \exists c: a+c=b$, but for this we need a zero-sum-free semiring.}
}
\Proof{
An $R$-distribution on $(S^1)_n$ is specified by a tuple $(p_1,p_2,\cdots,p_{n})$ where $p_i\in R$ and $\sum_i p_i \leq 1$. The value assigned to $\theta^i$ is given by $p_{i}$ and the value assigned to $\star$ is determined by this tuple, which is given by $1-\sum_i p_i$. The simplicial structure of $S^1$  can be used to show that
$$
d_j(p_1,p_2,\cdots,p_{n}) = \left\lbrace
\begin{array}{ll}
(p_2,\cdots,p_{n}) & j=0\\ 
(p_2,\cdots,p_j+p_{j+1},\cdots, p_{n}) & 0<j<n \\
(p_2,\cdots,p_{n-1}) & j=n
\end{array}
\right.
$$
and
$$
s_j(p_1,p_2,\cdots,p_{n}) = (p_1,\cdots,p_j,0,p_{j+1},\cdots,p_n).
$$ 
}

Note that $N_{\leq 1}(R)$ is a partial monoid {contained in $NR$} therefore combining with Corollary \ref{cor:maps to NM} we obtain:

\Cor{\label{cor:restriction to 2-skeleton for DS1}
Let $i_2:X^{(2)}\to X$ denote the inclusion of the $2$-skeleton. Then the following induced map is an isomorphism
$$
(i_2)^*:\sDist(X,S^1) \to \sDist(X^{(2)},S^1)
$$
}
\begin{proof}
{The inclusions $i_2$ and $D_RS^1\cong N_{\leq 1}R\to NR$ induce a commutative diagram}
$$
\begin{tikzcd}[column sep=huge,row sep =large]
\catsSet(X,D_RS^1) \arrow[d,hook] \arrow[r,"(i_2)_*"] & \catsSet(X^{(2)},D_R S^1) \arrow[d,hook] \\
\catsSet(X,NR) \arrow[r,"\cong"] & \catsSet(X^{(2)},NR)
\end{tikzcd}
$$
{Then $(i_2)_*$ is injective by commutativity of the diagram. For surjectivity we observe that in the commutative diagram}
$$
\begin{tikzcd}[column sep=huge,row sep =large]
X^{(2)} \arrow[dr] \arrow[r,"i_2",hook] \arrow[d,"p"] & X\arrow[d,dashed,"\tilde p"] \\
D_RS^1 \arrow[r,hook] & NR 
\end{tikzcd}
$$
{the lift $\tilde p$ of the diagonal map has image contained in $D_RS^1$ since it is determined by the restriction to the $1$-skeleton of $X$ by Lemma \ref{lem:maps into partial monoids and homotopy}.}
\end{proof}

In particular, for $R=\NN$ we obtain that 
$$
S^1 = D_\NN(S^1) \xrightarrow{\cong} N_{\leq 1}(\NN)
$$
Therefore we also have
$$
(i_2)^*:\dDist(X,S^1) \to \dDist(X^{(2)},S^1)
$$

%For each non-zero element $a\in A$ there is an inclusion of simplicial sets
%$$
%S^1_{(a)} \to NA
%$$ 
%that sends the non-degenerate simplex $\sigma^{01}$ to the $1$-simplex given by $a$.

\subsection{{Example:} Torus with involution}
\label{sec:torus-with-involution}

%Let $S^1$ denote the simplicial circle, i.e., $\Delta[1]/\partial\Delta[1]$.
We will take $X=S^1\times S^1$ and $Y=S^1$.
We will write $\sigma_0$ and $\sigma_1$ for the non-degenerate $2$-simplices of $X$. 
%The boundary edges will be denoted by
The face maps for $c\in\{0,1\}$ are given by
$$
d_i \sigma_c = \left\lbrace
\begin{array}{ll}
x_{c+1}  & i=0 \\
x   & i=1\\
x_c   & i=2.
\end{array}
\right.
$$
where the summation in the first formula is taken modulo $2$.
%$$
%\begin{aligned}
%x &=d_1\sigma_0=d_1\sigma_1   \\
%x_0 &=d_2\sigma_0=d_0\sigma_1 \\
%x_1 &=d_0\sigma_0=d_2\sigma_1.
%\end{aligned}
%$$
The unique vertex {of $X$} will be denoted by $v$.
{We will regard the circle as a simplicial subset of the nerve space
$$
S^1 \to N\ZZ_2
$$
defined by sending $\sigma^{01}\mapsto 1$ in degree $1$. Using the notation of Example \ref{ex:simp-dist} a simplicial distribution $p:\Delta[2]\to D(S^1)$ is specified by a tuple $(p^{00},p^{01},p^{10},p^{11})$ satisfying $\sum_{a,b}p^{ab}=1$ and $p^{11}=0$. Note that $(S_1)_2$ is in bijective correspondence with $\set{(0,0),(0,1),(1,0)}\subset \ZZ_2^2$ hence a distribution $p$ in $D((S^1)_2)$ regarded as a distribution in $D(\ZZ_2^2)$ will have {$p^{11}=0$}. Then} simplicial distributions on $(X,Y)$ are given as follows
$$
\sDist(X,Y) =\set{(t_1,t_2)\in [0,1]^2:\, 1-(t_1+t_2)\geq 0};
$$
see Figure (\ref{fig:simp dist torus}).
%$$
%\includegraphics[width=.3\linewidth]{simp_dist_torus}
%$$

\begin{figure}[h!]
\centering
\includegraphics[width=.4\linewidth]{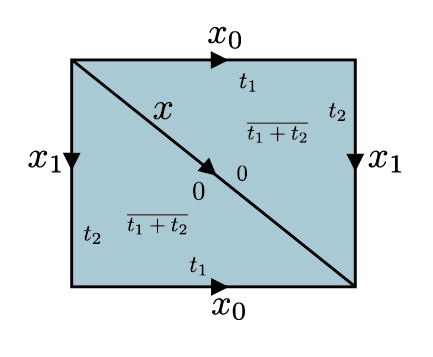}
\caption{Simplicial distribution on the torus consisting of two triangles. Edges labeled by $x_0$ are identified. Similarly for $x_1$.
Top-right triangle is $\sigma_0$ and the bottom-left triangle is $\sigma_1$.  We write $\overline{t_1+t_2}=1-(t_1+t_2)$.}
\label{fig:simp dist torus}
\end{figure} 

%where $t_1,t_2 \in [0,1]$ such that $\overline{t_1+t_2} = 1-(t_1+t_2) \geq 0$. 

Let us take $G=\ZZ_2$ and {endow} 
%\is{endow?} 
both spaces with a $G$-action. On $X$ the action is given by the swapping of the coordinates and on $Y$ the action is trivial.
A $G$-equivariant distribution would be one in which $t_1=t_2$. Then  $\sDist_G(X,Y)$ can be identified with the interval $[0,1/2]$. When $t_1=0$ we obtain the unique deterministic distribution $p_{\sigma_i}=\delta^{00}$, 
%\comm{is $\delta^{00}$ defined?} 
and for $t_1=1/2$ we obtain 
$$p^{ab}_{\sigma_i} = \left\lbrace
\begin{array}{ll}
1/2 & a+b=1 \\
0 & \text{otherwise.}
\end{array}
\right.
$$  
Therefore $p$ is $G$-equivariantly contextual if and only if $t_1>0$.

%\begin{figure}[h!]
%\centering
%\includegraphics[width=.8\linewidth]{}
%\caption{{The non-degenerate $2$-simplices of the Borel construction and the face identifications.}}
%\label{fig:borel}
%\end{figure} 

\begin{figure}[h!]
\centering
\begin{subfigure}{.30\textwidth}
  \centering
  \includegraphics[width=1\linewidth]{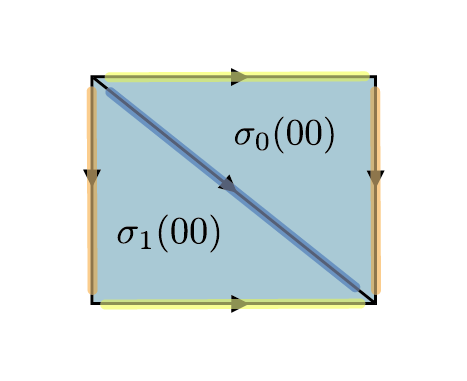}
  \caption{}
  \label{fig:borel1}
\end{subfigure}%
\begin{subfigure}{.45\textwidth}
  \centering
  \includegraphics[width=1\linewidth]{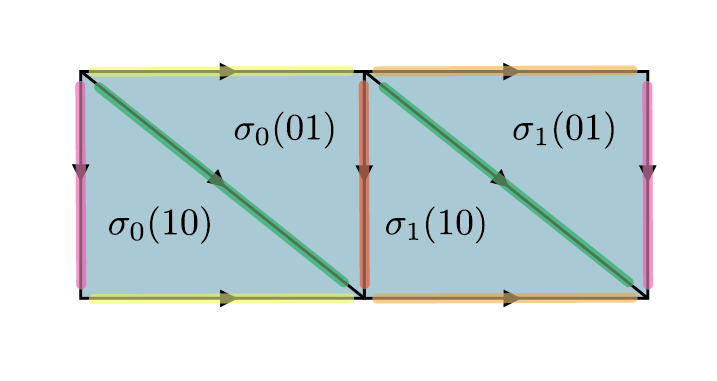}
  \caption{}
  \label{fig:borel2}
\end{subfigure}%
\begin{subfigure}{.25\textwidth}
  \centering
  \includegraphics[width=.8\linewidth]{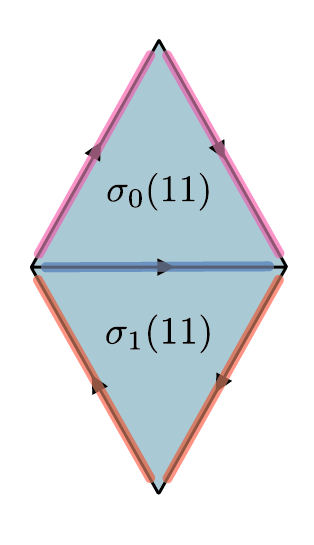}
  \caption{}
  \label{fig:borel3}
\end{subfigure}
\caption{{The non-degenerate $2$-simplices of the Borel construction and the face identifications. {Edges depicted by the same color are identified, e.g., in (a) the top (left) and bottom (right) edges are identified to give a torus.}}}
\label{fig:borel}
\end{figure}

Now, let us consider the Borel construction $X\hquo G$. By Corollary \ref{cor:restriction to 2-skeleton for DS1} it suffices to consider
%For simplicity and illustrative purposes we will look at 
the $2$-skeleton $W = (X\hquo G)^{(2)}$.
% of the Borel construction. 
We will use the following notation for the simplices:
$$
v= [(1),v] \;\;\;\;\;\; x_b(a) = [(1,a),x_b] \;\;\;\;\;\; \sigma_c(ab) = [(1,a,b),\sigma_c].
$$
The face maps of the $2$-simplices are given by
$$
d_i \sigma_c(ab) = \left\lbrace
\begin{array}{ll}
x_{a+c+1}(b)  & i=0 \\
x(a+b)   & i=1\\
x_c(a)   & i=2.
\end{array}
\right.
$$
%\iis{
%Different code to fix alignment:
%\[
%d_i \sigma_c(ab) = 
%\begin{cases}
%x_{a+c+1}(b)  & i=0 \\
%x(a+b)   & i=1\\
%x_c(a)   & i=2.
%\end{cases}
%\]
%}
The non-degenerate $2$-simplices can be grouped into three parts each of which is a closed surface {as in Figure (\ref{fig:borel}).}
%$$
%\includegraphics[width=.8\linewidth]{borel}
%$$
Let us describe simplicial distributions on $(W,S^1)$. 
%By 
{Using}
the 
%rightmost 
closed 
surface {in Figure (\ref{fig:borel3})} we have
$$
\begin{aligned}
p_{\sigma_0(11)}^{01}&= p_{\sigma_0(11)}^{10} = t_1\\
p_{\sigma_1(11)}^{01}&= p_{\sigma_1(11)}^{10} = t_2
\end{aligned}
$$  
where $t_1,t_2\in [0,1/2]$. On the other hand, the middle surface forces $t_1=t_2$. Considering the %leftmost 
surface in {Figure (\ref{fig:borel1})} as well and using Corollary \ref{cor:restriction to 2-skeleton for DS1} we obtain   
{$$
\sDist_G(X,Y) = \set{(t,t):\, t\in [0,1/2]} \subset \sDist(X,Y)
$$
and}
\begin{equation}\label{dia:iso for torus}
\sDist_G(X,Y) \xrightarrow{\cong} \sDist(X\hquo G,Y)
\end{equation}
A similar {isomorphism holds between the} %conclusion holds for the 
deterministic distributions. Hence we verified Proposition 
\ref{pro:Equivariant contextual iff Borel contextual}. 
%\comm{we need to argue that it is okay to restrict to the $2$-skeleton here}

\section{Distributions relative to a subspace}
\label{sec:Distributions relative to a subspace}

%\begin{defn}\label{def:relative contextuality}
%Let $X$ and $Y$ be simplicial sets, and $i:Y\to X$ be an inclusion of simplicial sets. 
%A simplicial distribution on $(X,Y)$ relative to $Y$ is a simplicial set map $p:X\to D(Y)$ such that the following diagram commutes
%\begin{equation}\label{dia:relative version}
%\begin{tikzcd}
%Y \arrow[dr,"\delta"] \arrow[d,"i"'] & \\
%X \arrow[r,"p"] & D(Y)
%\end{tikzcd}
%\end{equation}
%\end{defn}

%There is an equivariant version of this definition:

%\comm{review from SimpCont section 5.3 then pass to the equivariant version. $Z\subset X$ and $Y=NA$. Generalize it to the case where  $Y\subset NA$ and specialize to $Z=Y$. Then talk about the equivariant version.
%}
 
In this section $X$ and $Y$ denote simplicial $G$-sets where the $G$-action on $Y$ is trivial. {We consider simplicial distributions relative to a subspace and introduce cohomology classes that can detect contextuality for such distributions.}

\begin{defn}\label{def:G-equivariant relative contextuality}
Let $i:Y\to X$ be a $G$-equivariant inclusion. 
A $G$-equivariant   distribution on $(X,Y)$ {\it relative} to $Y$ is a simplicial $G$-map $p:X\to D_R(Y)$ such that the following diagram commutes
\begin{equation}\label{dia:relative version}
\begin{tikzcd}[column sep = huge, row sep = large]
Y \arrow[dr,"\delta_Y"] \arrow[d,"i"'] & \\
X \arrow[r,"p"] & D_R(Y)
\end{tikzcd}
\end{equation}
\end{defn}

Note that the non-equivariant definition can be obtained by taking $G$ as the trivial group.

\Lem{\label{lem:Borel preserve relative}
If $p:X\to D(Y)$ is a $G$-equivariant   distribution relative to $Y$ then $\tilde p_G:X\hquo G \to D_R(Y)$ {in (\ref{dia:tilde p_G construction})} is a simplicial distribution relative to $Y$.} 
\begin{proof}
%\comm{@Igor:proof should be easy, could you add}
%This follows from the following commutative %diagram:
%$$
%\begin{tikzcd}
%Y \arrow[d] \arrow[r,"\delta_Y"] & D_R(Y) %\arrow[d] \arrow[rd,equal] & \\
%X\hquo G \arrow[r,"p_G"] & D_R(Y)\hquo G \arrow[r,"\pi_2"] & D_R(Y)
%\end{tikzcd}
%$$
%where the left vertical map is the composite $Y \xrightarrow{i} X \to X\hquo G$ and the middle vertical map is the canonical inclusion into the Borel construction.
%The fact that $\tilde{p}_G$ is a simplicial distribution follows from the fact that $p$ is equivariant and the definition of $\tilde{p}_G$. Therefore we need to prove the relativity part.
{Follows from the following commutative diagram}
\[
\begin{tikzcd}[column sep=huge,row sep=large,ampersand replacement=\&]
Y\ar[d,"i"]\ar[dr,"\delta"]\& \\
X\ar[r,"p"]\ar[d,"\iota_X"]\&D(Y)\\
X\hquo G\ar[ur,"\tilde{p}_G"']\&
\end{tikzcd}
\]
{where the top triangle commutes by assumption on $p$ and the lower triangle commutes by the construction of $\tilde p_G$.}
%We  note that $\tilde{p}_G$ makes the following diagram commute:
%\[
%\begin{tikzcd}[column sep=huge,row sep=large,ampersand replacement=\&]
%X\ar[r,"p"]\ar[d,"\iota_X"]\&D(Y)\\
%X\hquo G\ar[ur,"\tilde{p}_G"']\&
%\end{tikzcd}
%\] 
%\is{Added period after $X\hquo G$.}
%Thus the relativity of $\tilde{p}_G$ is equivalent to the commutativity of the following diagram:
%\[
%\begin{tikzcd}[column sep=huge,row sep=large,ampersand replacement=\&]
%Y\ar[d,"i"]\ar[dr,"\delta"]\& \\
%X\ar[r,"p"]\ar[d,"\iota_X"]\&D(Y)\\
%X\hquo G\ar[ur,"\tilde{p}_G"']\&
%\end{tikzcd}
%\]
%\comm{introduce $\iota_X$ earlier} 
%However, the top triangle is commuting by the assumption, and the bottom one is commuting by the observation above. Thus the statement follows.
\end{proof}

%We will consider the following inclusion map
%$$
%j: Y \xrightarrow{} Y\hquo G \xrightarrow{i_G} X\hquo G  
%$$ 
%where the first map is the canonical inclusion into the Borel construction.

We will consider the diagram of cofiber sequences
\begin{equation}\label{dia:cofibrations compare}
\begin{tikzcd}[column sep=huge,row sep =large]
Y \arrow[d,"i"] \arrow[r,equal] & Y \arrow[d,"j"] \arrow[r] & Y\hquo G \arrow[d,"i_G"] \\
X \arrow[d,"q"] \arrow[r,"\iota"] & X\hquo G \arrow[d] \arrow[r,equal] & X\hquo G \arrow[d] \\
\bar X \arrow[r,"\bar\iota"] & \overline{X\hquo G} \arrow[r,"c"]  & \bar X\hquo G
\end{tikzcd}
\end{equation}

\Lem{\label{lem:equivariant section vs normal section}
%The inclusion $i:Y\to X$ has a $G$-equivariant section if and only if the inclusion $j:Y\to X\hquo G$ has a section.
%\comm{
The following are equivalent.
\begin{enumerate}
\item The inclusion $i:Y\to X$ has a $G$-equivariant section.
\item The inclusion $i_G:Y\hquo G\to X\hquo G$ has a section.
\item The inclusion $j:Y\to Y\hquo G \xrightarrow{i_G}  X\hquo G$ has a section.
\end{enumerate}
%}
} 
\Proof{First we prove the equivalence of (1) and (3).
Let $s:X\to Y$ be a $G$-equivariant section of $i$. The associated map $\tilde s_G:X\hquo G\to Y$ gives a section of $j$. Conversely, let $r:X\hquo G\to Y$ be a section of $j$. Then precomposing with the canonical ($G$-equivariant) map $X\to X\hquo G$ gives a $G$-equivariant section of $i$.  
Next we prove the equivalence of (2) and (3). Let $t$ be a section of $i_G$. Then projecting onto the second factor gives a section of $j$. Given a section $r$ of $j$, composing this with the canonical inclusion $Y\to Y\hquo G$ gives a section of $i_G$.  
}

\subsection{Cohomology}

%Let $A$ be an abelian group.
%When $Y$ is the nerve space $NA$ we can interpret deterministic distributions as cohomology classes.
Our default space of choice for $Y$ will be the nerve space $NA$ {of} some finite abelian group $A$. In this case we can interpret deterministic distributions as cohomology classes. 
%{See \cite[Sec. 5.3]{okay2022simplicial} for the case of $A=\ZZ_d$.}
%\is{I think it would be good to have a reference to the specific place in SimpDist paper.}
Let $Z$ be a simplicial subset of $X$. 
Consider the cofiber sequence 
$$
Z \xrightarrow{i} X \xrightarrow{q} \bar X
$$
We have a commutative diagram
\begin{equation}\label{dia:deterministic dist and cohomology exact}
\begin{tikzcd}[column sep=huge,row sep =large]
\dDist(\bar X,NA) \arrow[d,"\cong"] \arrow[r,"q^*"] & \dDist(X,NA) \arrow[d,"\cong"] \arrow[r,"i^*"] & \dDist(Z,NA) \arrow[d,"\cong"] & \\
H^1(\bar X,A) \arrow[r,"q^*"] & H^1(X,A) \arrow[r,"i^*"] & H^1(Z,A) \arrow[r,"\zeta"] & H^2(\bar X,A) 
\end{tikzcd}
\end{equation}
where the vertical maps send a deterministic distribution $\delta^s$ to the cohomology class $[s]$. The bottom sequence is exact; see {\cite[Sec. 5.3]{okay2022simplicial}} for details.
%\is{I think this part needs to be explained - how and why you can assign a cohomology class to a deterministic distribution. Or just add the reference.}

\Lem{\label{lem:extend iff zero in cohomology}
{Let $Z$ be a simplicial subset of $X$.} 
A simplicial set map $r:Z\to NA$ extends to $X$ if and only if $\zeta([r])=0$.
}
\Proof{ 
From the exactness of the bottom row in Diagram (\ref{dia:deterministic dist and cohomology exact}) we see that $\zeta([r])=0$ if and only if there exists $[\tilde r]\in H^2(X,A)$ such that $i^*([\tilde r])=[r]$. Vertical isomorphisms {imply} that this latter condition is equivalent to the existence of an extension.
}

The cohomology class $\zeta([r])$ is represented by the cocycle $\zeta(r)$ defined as follows:
\begin{itemize}
\item Let $\tilde r:X:\to A$ denote the extension of $r$ by setting its value equal to zero on $X_1-Z_1$.
\item The coboundary $d\tilde r :X_2\to A$ restricts to a cocycle on $\bar X_2$ which we denote by $\zeta(r)$.
\end{itemize}

Let us specialize to $Z=Y=NA$.
We take $r$ to be the identity map $\idy:NA\to NA$. Note   that in this case we have
$$
H^1(NA,A) \cong \catGrp(A,A).
$$
Then $\idy:NA\to NA$ can be identified with the identity homomorphism $\idy:A\to A$. We will write
\begin{equation}\label{eq:gamma}
\gamma=\zeta(\idy).
\end{equation}
%\comm{provide the explicit Snake's lemma construction}

\Pro{\label{pro:section iff class zero, nonzero implies contextual}
The inclusion $i:NA\to X$ has a section if and only if $[\gamma]=0$. Moreover, if $[\gamma]\neq 0$ then 
  every simplicial distribution $p:X\to {D_R}(NA)$ relative to $NA$ is  contextual.
}
\Proof{
%\comm{@Igor: Follows from Lemma \ref{lem:extend iff zero in cohomology}. Provide proof for the contextuality part.} 
The first statement follows from Lemma \ref{lem:extend iff zero in cohomology}.
For the second statement, assume that $p$ is non-contextual, that is, there exists $d=\sum_r d(r)\delta^r$ with $\Theta(d)=p$. Then there exists $s$ such that $d(s)>0$ and for every $\sigma\in  X_n$ we have
$$
p_\sigma(s_\sigma) = \sum_{r:\,r_\sigma=s_\sigma} d(r) > d(s)>0. 
$$
Taking $\sigma\in (NA)_n$ we have
$$
p_\sigma = (p\circ i)_\sigma = \delta^\sigma
$$
where $\delta^\sigma$ is the delta-distribution peaked at $\sigma$.
Combining the two equations we  obtain
$
\delta^\sigma(s_\sigma)>0
$
which implies that $s_\sigma=\sigma$. This shows that $s:X\to NA$ is a section of $i$. 
}
 
Now, we turn to the equivariant case. 
Let $i:NA\to X$ be a $G$-equivariant inclusion.
We apply the cohomology discussion to the cofiber sequences in Diagram (\ref{dia:cofibrations compare}). For the middle 
%and right 
cofibration 
%sequences 
sequence
we introduce the following 
%cocycles:
cocycle: Using the connecting homomorphism $\zeta:H^1(NA,A)\to H^2(\overline{X\hquo G},A)$ we define
\begin{equation}\label{eq:gamma-G}
\gamma_G=\zeta(\idy).
\end{equation} 
By definition we have 
$$
\bar\iota^*(\gamma_G) =\gamma.
$$
%obtained from the Borel construction:  
%$$
%NA\xrightarrow{j}  X\hquo G \to \overline{X\hquo G}
%$$
%We will write 
%\begin{equation}\label{eq:gamma-G}
%[\gamma_G]=\zeta(\idy)
%\end{equation}
%where $\zeta:H^1(NA,A)\to H^2(\overline{X\hquo G},A)$ is the connecting homomorphism. 
%Applying Proposition  \ref{pro:section iff class zero, nonzero implies contextual} to $j:NA\to X\hquo G$ we obtain the following.

\Pro{\label{pro:section of j vs gamma_G}
The inclusion $j:NA \to X\hquo G$ has a section if and only if $[\gamma_G]=0$. Moreover, if $[\gamma_G]\neq 0$ then every simplicial distribution $p:X\hquo G \to D_R(NA)$ relative to $NA$ is contextual.
%\item $i_G:NA\hquo G \to X\hquo G$ has a section if and only if $[\tilde \gamma_G]=0$.
%\item $[\tilde \gamma_G]=0$ in $H^2(\bar X\hquo G ,A)$. 
%\item $[ \gamma_G]=0$ in $H^2(\overline{X\hquo G },A)$.
%\end{enumerate}
%Moreover, these two statements are equivalent to 
%\begin{enumerate}
%\item[3.]  $i:NA\to X$ has a $G$-equivariant section.
%\end{enumerate}
}
\Proof{Follows from Proposition \ref{pro:section iff class zero, nonzero implies contextual} applied to $j$.}

\Cor{\label{cor:class nonzero implies equivariant contexuality}
If $[\gamma_G]\neq 0$ then every $G$-equivariant distribution $p:X\to D_R(NA)$ relative to $NA$ is $G$-equivarianly contextual.  
}
\Proof{
%Follows from Lemma \ref{lem:equivariant section vs normal section} and Proposition \ref{pro:equivalent conditions section vs cohomology}.
By Proposition \ref{pro:section of j vs gamma_G}, $[\gamma_G]\neq 0$ implies that $\tilde p_G$ is contextual. This is equivalent to $p$ being $G$-equivariantly contextual by Proposition \ref{pro:Equivariant contextual iff Borel contextual}.
}  

\Rem{\label{rem:Y subspace NA}
We can slightly generalize one direction of Proposition \ref{pro:section of j vs gamma_G} to the case where $Y$ is a simplicial subset of $NA$: For $Y$ a simplicial subset of $NA$ we have that if {the inclusion} {$i:Y\to X$} has a $G$-equivariant section then $[\gamma_G]=0$. The converse of this statement may not be true since the extension obtained using Lemma \ref{lem:extend iff zero in cohomology} gives a map with target $NA$, not necessarily the simplicial subset $Y$. We are interested in the case where $Y$ is the simplicial subset given by the circle 
$$
S^1_{(a)} \to NA
$$
where the unique non-degenerate $1$-simplex is mapped to a non-trivial element $a\in A$.
}

We have a similar observation for the rightmost cofibration in Diagram (\ref{dia:cofibrations compare}).  Using the connecting homomorphism $\zeta:H^1(NA,A)\to H^2(\bar X\hquo G,A)$ and the projection map $\pi_1:NA\hquo G\to NA$ we define
\begin{equation}\label{eq:gamma-G}
\tilde \gamma_G=\zeta(\pi_1).
\end{equation} 
We have
$$
c^*(\tilde \gamma_G) = \gamma_G.
$$

\Pro{\label{pro:section of i_G vs tilde-gamma_G}
The inclusion $i_G:NA \hquo G \to X\hquo G$ has a section if and only if $[\tilde\gamma_G]=0$. %Moreover, if $[\tilde\gamma_G]\neq 0$ then every simplicial distribution $p:X\hquo G \to D_R(NA)$ relative to $NA\hquo G$ is contextual.
Moreover, we have $[\gamma_G]=0$ if and only if $[\tilde \gamma_G]=0$.
}
\Proof{
Proof of the first statement follows from Lemma \ref{lem:extend iff zero in cohomology} applied to $\pi_1:NA\hquo G \to NA$.
The second statement follows from combining Proposition \ref{pro:section of j vs gamma_G} and \ref{pro:section of i_G vs tilde-gamma_G} with Lemma \ref{lem:equivariant section vs normal section}.
}

{Combined with Proposition \ref{pro:section of j vs gamma_G} $[\tilde \gamma_G]$ can also be used to detect contextuality.}

\subsection{Total complex of the Borel construction}

For a simplicial $G$-set $X$, we define a bisimplicial set
\begin{equation}\label{eq:bisimplicial Borel construction}
S_G(X)_{p,q} =  (EG)_p \times_G  X_q. 
\end{equation} 
The diagonal $\diag(S_G(X))$ of this bisimplicial set is the Borel construction $X\hquo G$. In this section we recall the Eilenberg--Zilberg theorem that relates the chains on the diagonal of a bisimplicial abelian group and the total complex of the associated double complex. We apply this result to the   bisimplicial set $S_G(X)$.

Let $L$ be a bisimplicial abelian group. There are two chain complexes associated to this object:
\begin{itemize}
\item The chain complex $C(\diag(L))$ of the diagonal.
\item The total complex $\Tot(C(L))$ of the double complex $C(L)$.
\end{itemize}
By the Eilenberg--Zilber theorem these two chain complexes are chain homotopy equivalent \cite[Chapter 8.5]{weibel1995introduction}. This equivalence is given by the Alexander--Whitney map
$$\Delta: C(\diag(L)) \to \Tot(C(L)) $$ 
and the Eilenberg--Zilber map
$$
\nabla: \Tot(C(L)) \to C(\diag(L))
$$
We are interested in degree $2$ of these maps:
\begin{itemize}
\item For $a\in L_{2,2}$ we have
$$
\Delta_2(a) = ( (d_0^v)^2(a) , d_2^h d_0^v(a) , d_1^hd_2^h(a) ). 
$$
\item For $(a,b,c)\in L_{2,0}\oplus L_{1,1}\oplus L_{0,2}$ we have
$$
\nabla_2(a,b,c) = s_1^vs_0^v(a) + ( s_1^hs_0^v(b) - s_0^h s_1^v(b) ) + s_1^h s_0^h(c).
$$ 
\end{itemize}

We will apply this to the bisimplicial abelian group
$$
L_{p,q} = \ZZ[S_{p,q} ]
$$   
where $S=S_G(X)$ is the bisimplicial set given in Equation (\ref{eq:bisimplicial Borel construction}).
% associated to a simplicial $G$-set.
Then the maps $\Delta$ and $\nabla$ in degree $2$ are given by
$$ 
\Delta_2[(1,g_1,g_2),x] = ( [(1,g_1,g_2), (d_0)^2(x)] , [(1,g_1),d_0(x)], [(1),x] )   
$$
and
$$
\begin{aligned}
\nabla_2( [(1,g_1,g_2),x] , [(1,g_1') , x']  ,& [ (1) , x'' ]   ) = \\
 &[(1,g_1,g_2),s_1s_0(x)] +  [(1,g_1',1) , s_0(x') ] -  [(1,1,g_1') , s_1(x') ]  + [(1,1,1) , x'' ] . 
 \end{aligned}
$$ 
{By applying} $\Hom(-,A)$ to the chain complexes {$\Tot(C(L))$ and $C(\diag(L))$} we obtain the dual maps between the associated cochain complexes with coefficients in $A$.
The dual map $\Delta^*: C^*( \Tot(C(L)) ,A) \to C^*(\diag (L),A) $ sends a $2$-cochain $(\alpha,\alpha',\alpha'')$ to 
$$
\theta[(1,g_1,g_2),x] = \alpha[(1,g_1,g_2),(d_0)^2(x)] + \alpha'[(1,g_1),d_0(x)] + \alpha''[(1),x].
$$
On the other hand, the dual map $\nabla^2: C^*(\diag (L),A)\to C^*(\Tot(C(L)),A)$ sends a $2$-cochain $\theta$ to $(\alpha,\alpha',\alpha'')$   where
\begin{equation}\label{eq:nabla2 formula}
\begin{aligned}
\alpha[(1,g_1,g_2),x] &= \theta[(1,g_1,g_2),s_1s_0(x)] \\
\alpha'[(1,g_1'),x'] &= \theta[(1,g_1',1),s_0(x')]- \theta[(1,1,g_1'),s_1(x')]  \\
\alpha''[(1),x''] &= \theta[(1,1,1),x''].
\end{aligned}
\end{equation}

%\subsection{Cohomology class for symmetry}

%{\color{gray}
%Consider the diagram of cofiber sequences
%\begin{equation}\label{dia:cofibrations compare}
%\begin{tikzcd}
%Y \arrow[d,"i"] \arrow[r,equal] & Y \arrow[d,"j"] \arrow[r] & Y\hquo G \arrow[d] \\
%X \arrow[d,"q"] \arrow[r] & X\hquo G \arrow[d] \arrow[r,equal] & X\hquo G \arrow[d] \\
%\bar X \arrow[r] & \overline{X\hquo G} \arrow[r,"c"]  & \bar X\hquo G
%\end{tikzcd}
%\end{equation}
%}

Next, we apply the formulas in Equation (\ref{eq:nabla2 formula}) to the middle and the rightmost cofibrations in Diagram (\ref{dia:cofibrations compare}).
The Borel construction $\bar X
\hquo G$ can be described as the diagonal of the bisimplicial set 
\begin{equation}\label{eq:bar S}
\bar S_{p,q} = S_G(\bar X)_{p,q}= (EG)_p\times_G \bar X_q.
\end{equation}
 Similarly we can describe $\overline{X\hquo G}$ as the diagonal of the following bisimplicial set
\begin{equation}\label{eq:T}
T_{p,q} =  \frac{(EG)_p\times_G X_q }{\ast \times Y_q}.
\end{equation}
%\is{Period added}
%Recall the isomorphism
%\begin{equation}\label{eq:nabla isomorphism}
%\nabla: C^*(\ZZ[\diag(T)],A) \to C^*(\Tot(\ZZ[T]),A) 
%\end{equation} 
The Eilenberg--Zilber map is
%This isomorphism is 
natural with respect to the map $T\to \bar S$ of bisimplicial sets.

\Lem{\label{lem:triple representation}
Under the 
Eilenberg--Zilber map
the cocycle $\gamma_G $ is represented by the triple $(0,0,\gamma)$ in the total complex. 
}
\Proof{
Let $\pi_2\colon NA\hquo G \to NA$ 
%\is{changed $\pi_1$ to $\pi_2$} 
denote the projection map, recall that $NA\hquo G =NG\times NA$. 
%The pull-back of $\tilde\beta_G=\zeta\circ \alpha_{NA\hquo G}(\pi_1)$, where $\zeta$ is the connecting homomorphism of the rightmost cofibration sequence in Diagram (\ref{dia:cofibrations compare}), along the map $c$ is $[\beta_G]$. 
Since the Eilenberg--Zilber map $\nabla$ is natural with respect to maps of bisimplicial sets we can first compute $\nabla^2(\tilde \gamma_G)$ and then pull-back along the corresponding map between the total complexes. Applying the formula in Equation (\ref{eq:nabla2 formula}) to $\bar S$ we obtain that
\begin{equation}\label{eq:computation nabla2}
\nabla^2(\tilde \gamma_G) = (0,0,\gamma).
\end{equation}
 Pulling back this cochain along $T\to \bar S$ gives the desired result. 
To  obtain 
%\is{To see that we obtain?} 
Equation (\ref{eq:computation nabla2}) we first consider how the cocycle $\tilde \gamma_G$ is defined. In the cohomology long exact sequence   we begin with $\pi_1$ regarded as a $1$-cochain $\pi_1:A\times G\to A$, again given by projection, and define $\tilde\pi: X_1\times_G G^2 \to A$ by lifting $\pi_1$ on the rest of the $1$-simplices by setting it equal to zero. Then applying the coboundary in $ X\hquo G$ we obtain $\tilde \gamma_G$. Looking at the description of $d(\tilde\pi_1)$ and the formula in Equation (\ref{eq:nabla2 formula}) we obtain Equation (\ref{eq:computation nabla2}).
In more details, the map $\nabla^2(\tilde \gamma_G)$ consists of three factors. For  $x\in X_0$, $x'\in X_1$ and $x''\in X_2$, we have
\begin{enumerate}
\item $\tilde \gamma_G((1,g_1,g_2),s_1s_0(x))=0$ since $\tilde\pi$ restricted to a degenerate simplex of the form $s_1s_0(x)$ is zero.
\item $\tilde \gamma_G((1,g_1',1),s_0(x'))-\tilde \gamma_G((1,1,g_1'),s_1(x'))=0$: We have 
$$
\begin{aligned}
&\tilde \gamma_G((1,g_1',1),s_0x')-\tilde \gamma_G((1,1,g_1'),s_1x')\\
&= \tilde{\pi}((g_1',1),x')-\tilde{\pi}((1,g_1'),x') +\tilde{\pi}((1,g_1'),s_0d_1x')
 -\tilde{\pi}((1,g_1'),s_0d_0x')+\tilde{\pi}((1,g_1'),x')-\tilde{\pi}((1,1),x')\\
&=\tilde{\pi}((g_1',1),x')+\tilde{\pi}((1,g_1'),s_0d_1x')-\tilde{\pi}((1,g_1'),s_0d_0x')-\tilde{\pi}((1,1),x').
\end{aligned}
$$
In the sum above, the second and third component are $0$, as the second factor in $\tilde{\pi}$ is a degeneracy of the vertex of $X$. Thus we are left with the sum
\[
\tilde{\pi}((g_1',1),x')-\tilde{\pi}((1,1),x').
\]
However, this sum also becomes zero since if $x'\in (NA)_1$, then $G$ acts trivially on $x'$ and we have that $((g_1',1),x')=((1,1),x')$ in $X\hquo G$. Otherwise, if $x'\not\in (NA)_1$, then also $g_1'\cdot x\not\in (NA)_1$, giving that both components of the sum are zero.

\item $\tilde \gamma_G((1,1,1),x'')=\gamma(x'')$ since $\tilde{\gamma}_G$ pulls-back to $\gamma$ along $c\circ \bar\iota$ in Diagram \eqref{dia:cofibrations compare}.
\end{enumerate}

}

\Def{\label{def:phi class}
Assume that 
%Let 
$[\gamma]=0$ in $H^2(\bar X,A)$.
Consider a $1$-cochain $s:\bar X_1 \to A$ such that $d^vs=\gamma$ {in the total complex of $\bar S$ (see Equation (\ref{eq:bar S}))}. We introduce a $(1,1)$-cochain $\tilde \phi:G\times \bar X_1 \to A$  
%$\Tot(C(\ZZ[\bar S]))$ 
in  $\Tot(C(\ZZ[\bar S]))$
by
$$
\tilde \phi = - d^hs,
$$ 
where we identify $G^2\times_G\bar X_1$ with $G\times \bar X_1$.
Then we  define 
$$
\phi = c^*(\tilde\phi),
$$
a $(1,1)$-cochain 
%in $\Tot(C(\ZZ[T]))$.
{in the total complex of $T$ (see Equation (\ref{eq:T})).}
}

%\comm{$\phi$ depends on $s$. we use this class in the example. how useful is this class?}

{Note that one important case where $\phi$ can be defined is when $[\gamma_G]=0$. Since $\gamma_G$ pulls-back to $[\gamma]$ under $\bar\iota$ we have $[\gamma]=0$ in this case and thus we can define $\phi$.
}

{
\begin{thm}\label{thm:gammaG zero then beta and phi trivial}
Let $X$ be a simplicial $G$-set  and $i:NA\to X$ be an inclusion of simplicial $G$-sets where the action on $NA$ is trivial. 
Assume that the class
%Let 
$[\gamma_G]$ 
%denote the class 
defined in Equation (\ref{eq:gamma-G}) is zero, so that $\phi$ is defined (see Definition \ref{def:phi class}).
Then there exists $r:\bar X_1\to A$ such that $d^vr=0$ and $d^hr=\phi$. 
\end{thm}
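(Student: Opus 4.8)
The plan is to work entirely inside the total complex of the double complex attached to $T$ (Equation~(\ref{eq:T})), transporting the hypothesis $[\gamma_G]=0$ across the Eilenberg--Zilber equivalence. By Lemma~\ref{lem:triple representation} the cocycle $\gamma_G$ corresponds, under the Eilenberg--Zilber map, to the total cocycle $(0,0,\gamma)$ in $C^{2,0}\oplus C^{1,1}\oplus C^{0,2}$. Since $\nabla$ and $\Delta$ are mutually inverse chain homotopy equivalences they induce isomorphisms on cohomology, so $[\gamma_G]=0$ is equivalent to $(0,0,\gamma)$ being a total coboundary. First I would fix a total $1$-cochain $(u,v)$, with $u$ of bidegree $(1,0)$ and $v$ of bidegree $(0,1)$, such that $D(u,v)=(0,0,\gamma)$, where $D$ is the total differential.

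Next I would read off the three component equations of $D(u,v)=(0,0,\gamma)$. With the sign convention $D=d^h+(-1)^p d^v$ these are $d^h u=0$ in bidegree $(2,0)$, the identity $d^h v=d^v u$ in bidegree $(1,1)$, and $d^v v=\gamma$ in bidegree $(0,2)$ (any sign discrepancy is absorbed into $v$). The crucial observation is that $d^v u=0$: a bidegree $(1,0)$ cochain is a function on $T_{1,0}$, and the two vertical faces $T_{1,1}\to T_{1,0}$ are induced by the face maps $X_1\to X_0$; since $X$ is reduced both land in the unique $0$--simplex, so the alternating sum defining $d^v u$ vanishes identically. Hence $d^h v=0$ while still $d^v v=\gamma$. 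In other words $v$, viewed as a function $\bar X_1\to A$ via the identification $T_{0,1}=\bar X_1$, is a vertical primitive of $\gamma$ that is simultaneously horizontally closed.

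Finally I would compare $v$ with the cochain entering the definition of $\phi$. By Definition~\ref{def:phi class} we have $\phi=c^*(\tilde\phi)=c^*(-d^h s)=-d^h(c^* s)$, and because $c^*$ is a map of total complexes it commutes with both differentials, giving $d^v(c^* s)=c^*(d^v s)=\gamma$ (the last equality being exactly the pullback along $T\to\bar S$ carried out in the proof of Lemma~\ref{lem:triple representation}). Setting $r=v-c^* s\colon \bar X_1\to A$ then yields $d^v r=\gamma-\gamma=0$ and $d^h r=d^h v-d^h(c^* s)=0+\phi=\phi$, as required. The one substantive point, which I expect to be the main obstacle to phrase cleanly, is the vanishing $d^v u=0$: it is precisely what upgrades an arbitrary primitive of $(0,0,\gamma)$ (whose horizontal defect $d^h v=d^v u$ is a priori uncontrolled) into a usable horizontally closed one, and it relies on $X$ being reduced so that the $q=0$ row of $T$ is concentrated in the group direction. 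Without reducedness one would instead have to correct $v$ by a horizontally closed vertical primitive of $\gamma$, whose existence is not automatic.
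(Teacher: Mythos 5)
Your argument is, in structure, exactly the paper's own proof: translate $[\gamma_G]=0$ through Lemma \ref{lem:triple representation} and the Eilenberg--Zilber equivalence into the statement that $(0,0,\gamma)$ is a coboundary in the total complex of $T$, then compare a chosen primitive with the cochain $c^*s$ entering Definition \ref{def:phi class}. Your step $r=v-c^*s$ is precisely the manipulation the paper compresses into ``$[(0,0,\gamma)]=[(0,\phi,0)]$ \ldots\ this class is zero if and only if there exists $r$ with the desired property.'' In fact you are more explicit than the paper at the only delicate point, namely why the bidegree-$(1,0)$ component $u$ of the primitive does not obstruct the argument; the paper asserts the final equivalence without comment, and its parallel argument in Section 4 (proof of Theorem \ref{thm:betaG zero iff there exists s iff equivariant section}, ``where we used $d^vr=0$'') silently relies on the same vanishing, valid there because a partial monoid $M$ is reduced by Definition \ref{def:partial monoid}.

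The one genuine discrepancy is that your justification of $d^vu=0$ invokes reducedness of $X$, which is not among the hypotheses: the theorem allows an arbitrary simplicial $G$-set $X$ containing $NA$, and for non-reduced $X$ the two vertical faces $T_{1,1}\to T_{1,0}$ genuinely differ, so $d^vu$ need not vanish --- a limitation you flag yourself. In every case the paper actually uses (partial groups, $E=NA\times_\beta M$, the torus) $X$ is reduced, so your proof covers all applications and matches the paper's implicit setting. If you want the statement in its stated generality, you can close the gap by routing around the total complex: by Proposition \ref{pro:section of j vs gamma_G} and Lemma \ref{lem:equivariant section vs normal section}, $[\gamma_G]=0$ yields a $G$-equivariant retraction $f:X\to NA$, whose restriction $f_1:X_1\to A$ is a $G$-invariant vertical cocycle equal to $\idy$ on $(NA)_1$; then $r:=\tilde\idy-f_1-c^*s$, with $\tilde\idy$ the zero-extension of $\idy$ used to define $\gamma$ in Equation (\ref{eq:gamma}), vanishes on $(NA)_1$ and satisfies $d^vr=\gamma-0-\gamma=0$ and $d^hr=-d^h(c^*s)=\phi$, using that $\tilde\idy$ and $f_1$ are $G$-invariant (the action fixes $NA$ pointwise and preserves $X_1\setminus (NA)_1$). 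So the reliance on reducedness is a removable blemish rather than a flaw in the approach, but as written your proof establishes the theorem only for reduced $X$.
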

\begin{proof}
By Lemma \ref{lem:triple representation} $[\gamma_G]=0$ if and only if $[(0,0,\gamma)]$ is zero in the total complex. Since $[\beta]=0$ we can trivialize it and define $\phi$. That is, we have $[(0,0,\gamma)]=[(0,\phi,0)]$ in the total complex. Then this class is zero if and only if there exists $r:\bar X_1\to A$ with the desired property.
\end{proof}

\Rem{\label{rem:equiv contextuality double complex}
{\rm
By the proof of Theorem \ref{thm:gammaG zero then beta and phi trivial} we can show that $[\gamma_G]\neq 0$ by working in the total complex. 
We have $[\gamma_G]=0$ if and only if  
\begin{enumerate}
\item there exists $s:\bar X_1\to A$ such that $d^vs=\gamma$, and
\item there exists $r:\bar X_1\to A$ such that $d^vr=0$ and $d^hr = \phi$ (where $\phi=-d^hs$).
\end{enumerate}
%\comm{@Igor: check this}\is{This seems ok to me, the next sentence does not} 
Note that {$\tilde s=(0,r+s)$} satisfies $d\tilde s=(0,0,\gamma)$. 
%\is{To me it should be $s=(0,s''+s')$, as both $s'$ and $s''$ lives in the bidegree $(0,1)$ in the double complex}
We can apply this observation to the case when $[\gamma]=0$. Then $s$ as specified in (1) exists. Thus $[\gamma_G]=0$ if and only if $2$ as specified in (2) exists. Combined with Proposition \ref{pro:section of j vs gamma_G} we can use this observation to detect contextuality.
}
}

}

%\Cor{
%Let $Y$ be a simplicial subset of $NA$. If $i:Y\to X$ has a $G$-equivariant section then there exists $s:\bar X_1\to A$ such that $d^vs=0$ and $d^hs = \phi$.
%}

\subsection{{Example:} Torus relative to the diagonal}
\label{sec:torus-rel-to-diag} 

We revisit the example $X=S^1\times S^1$ and $Y=S^1$ from Section \ref{sec:torus-with-involution}.
Let $i:Y\to X$ be the diagonal map {sending the non-degenerate $1$-simplex of $Y$ to $x$ in Figure (\ref{fig:simp dist torus})}.
A simplicial distribution $p:X\to D(Y)$ is relative to $Y$ if and only if $p_x=\delta^1$. 
{Recall that  simplicial distributions on $(X,Y)$ are given by $(t_1,t_2)\in [0,1]^2$ such that $1-(t_1+t_2)\geq 0$. The relativity condition then implies that $t_1+t_2=1$.}
In the $G$-equivariant case, i.e., when $t_1=t_2\in [0,1/2]$, this forces $t_1=1/2$. Therefore there exists a unique $G$-equivariant simplicial distribution relative to $Y$ in this case.

When we take the Borel construction and then restrict to the $2$-skeleton {(this gives $W=(X\hquo G)^{(2)}$)} the inclusion map $j:Y\to W$ sends the unique non-degenerate $1$-simplex of $Y$ to the edge $x(0)$ (the diagonal edge in the 
%leftmost 
space in Figure (\ref{fig:borel1})). 
%\comm{(blue edge in the Figure)}. 
Again there is a unique simplicial distribution relative to $Y$ given by $t_1=1/2$ {by the isomorphism in (\ref{dia:iso for torus}).}  

Now, we consider the cofiber sequence
\begin{equation}\label{eq:cofiber sequence}
Y \xrightarrow{j} W \to \overline W
\end{equation}
Let $[ \gamma_G]$ denote the cohomology class obtained as the image of $[s_1]\in H^1(Y,\ZZ_2)$ under the connecting homomorphism, where $s_1:Y\to N\ZZ_2$ sends the unique non-degenerate $1$-simplex to the edge labeled by $1$. 
%\Lem{\label{lem:cohomology witness}
%If $[\tilde \gamma]\neq 0$ then every simplicial distribution $p:Z\to D(Y)$ relative to $Y$ is (strongly) contextual.
%}
%\Proof{
%\comm{Follows from Corollary 5.7 \cite{okay2022simplicial}.}
%}
We can compute the cochain $\gamma_G$ using the construction of the connecting homomorphism.
%\begin{itemize}
%\item First lift $s_1$ to a $1$-cochain $\tilde s_1: Z_1\to \ZZ_2$ by setting it to be zero for the edges in $Z_1-Y_1$.
%\item Then apply the coboundary of $Z$ to obtain $d(\tilde s_1)$ which can be regarded as a $2$-cocycle $\tilde\gamma:\bar Z_2\to \ZZ_2$. 
%\end{itemize}
More explicitly, we have $\gamma_G(\sigma)=1$ for 
$$
\sigma \in \set{\sigma_0(00), \sigma_1(00), \sigma_0(11), \sigma_1(11) },
$$ 
and otherwise zero. We see that $\gamma_G$ is non-zero on the upper triangle of the 
%rightmost 
surface {in Figure (\ref{fig:borel3})},  which is a closed surface in $\overline W$, therefore $[\gamma_G]\neq 0$. Then Proposition \ref{pro:section iff class zero, nonzero implies contextual} implies that the simplicial distribution with $t_1=1/2$ is contextual.
  
Using Lemma \ref{lem:Borel preserve relative} we conclude that every $G$-equivariant simplicial distribution $p:X\to D(Y)$ relative to $Y$ is $G$-equivariantly contextual.

Next we want to understand the representation of $\gamma_G$ in the total complex. 
Note that here we regard $\gamma_G$  as a cocycle on $\overline{(X\hquo G)}$, the cofiber of $Y\to X\hquo G$,  rather than the subspace $\overline W$. 
%To obtain the corresponding cocycle on the total complex we use $\nabla^2$ given in Equation (\ref{eq:nabla2 formula}). We obtain the triple $(\gamma,0,0)$ where
%$\gamma$ is the restriction of $\tilde \gamma$ along $\bar X\to \overline{X\hquo G}$.
Since $[\gamma]=0$ there exists $s:\bar X_1\to \ZZ_2$ such that $d^vs=\gamma$. For  concreteness we will choose
\begin{equation}\label{eq:s}
s(x_0)=0\;\;\text{ and }\;\; s(x_1)=1.
\end{equation}
In the total complex we have
$$
(0,0,\gamma)-d(0,s) =  (0,0-d^hs,\gamma-d^vs) = (0,-d^hs,0).
$$
The cocycle $\phi=-d^hs$, which can be regarded as a function $\ZZ_2\times \bar X_1\to \ZZ_2$, is given by
\begin{equation}\label{eq:phi-torus}
\begin{aligned}
\phi(g,x_i) = s(g\cdot x_i) +s(x_i) = \left\lbrace
\begin{array}{cc}
1 & g=1 \\
0 & \text{otherwise.} 
\end{array}
\right.
\end{aligned}
\end{equation}

\section{Quantum distributions}
\label{sec:quantum distributions} 
 
{Equivariant simplicial distributions arise naturally in quantum theory. In this section we provide an approach based on partial groups, which is complementary to the cofibration approach of the previous sections. Then we apply our theory to equivariant simplicial distributions that are obtained from quantum theory. {In Section \ref{sec:Mermin star} we consider an important example, {known as the Mermin star construction}, %studied in the context of
{which also has applications to}
 measurement-based quantum computing  \cite{raussendorf2016cohomological}.} {We show that this construction produces equivariantly contextual simplicial distributions and this kind of contextuality can be detected by the cohomology classes we construct in the theory. }
}

\subsection{Extensions of partial groups}

We recall the theory of extensions of partial groups from \cite{broto2021extension}.
There are two important groups associated to a partial group:
\begin{itemize}
\item The normalizer of the partial group $M$ is defined by
$$
N(M) = \set{ \theta\in M_1:\,  \exists\, \idy_M \xrightarrow{\theta} f_\theta   }.
$$
\item The center of $M$ is the subgroup $Z(M)\subset N(M)$ consisting of those $\theta\in M_1$ such that $\idy_M \xrightarrow{\theta} \idy_M$.
\end{itemize}
We are interested in the case where $M=NA$. We have $Z(NA)=N(NA)=A$.
Extensions of partial groups are described by fiber bundles.
% \is{Reference here?}. 
Let $N$ be another partial group. The extension of $M$ by $N$ is defined to be a fiber bundle
$$
N\to E\to M
$$ 
In the theory of partial groups the total space $E$ can be described by a twisted product. 
An $M$-twisting pair for $N$ is a pair of functions $(t,\beta)$ where
$$
t:M_1 \to \Aut(N)\;\;\;\;\text{ and }\;\;\;\; \beta:M_2\to N(N)
$$
satisfying
\begin{itemize}
\item $\beta(x,y)$ determines a homotopy $t(x\cdot y) \to t(x)\cdot t(y)$, 
%\comm{@Igor: I think we need to explain what this means.}
\item $t(1)=\idy$ and $\beta(x,1)=1=\beta(1,x)$ for all $x\in M_1$,
\item for all $(x,y,z)\in M_3$,
$$
t(x)(\beta(y,z)) \cdot \beta(x,y\cdot z) = \beta(x,y) \cdot \beta(x\cdot y,z).
$$ 
\end{itemize} 
The pair $(t,\beta)$ can be used to define a twisting function $\tau$ and the total space can be described as the twisted product $E=N\times_\tau M$; see \cite[Theorem 4.5]{broto2021extension}. We will not go into the description of $\tau$ in general. We will be more explicit when $N=NA$, the case of interest for us. In this paper we will need  two extreme cases of extensions:
\begin{itemize}
\item $\beta$ is trivial: In this case the twisted product is simply denoted by $N\ltimes M$. Our main example will come from an action of a group $G$ on $M$. We will consider $NG\ltimes M$.

\item $t$ is trivial: Such extensions are called central. We will consider the case where $N=NA$. Note that in this case $\beta:M_2\to A$ is precisely a $2$-cocycle. We will write $NA\times_\beta M$ for the twisted product.
\end{itemize}

Central extensions of partial groups generalize central extensions of groups. 
%Note that the theory of extension of partial groups generalize the theory of extension of groups. 
Let $K$ be a group $A\subset K$ be a central subgroup. We have a central group extension
\begin{equation}\label{eq:central extension}
{0} \to A \to K \xrightarrow{\epsilon} \bar K \to 1
\end{equation}
Consider a set-theoretic section 
$$\eta:\bar K\to K$$ 
of $\epsilon$ satisfying $\eta(1_{\bar K})=1_K$.
Then we can define a cocyle
$$
\beta(\bar k_1,\bar k_2) = \eta(\bar k_1) \eta(\bar k_2) \eta(\bar k_1 \bar k_2)^{-1}.
$$
The associated cohomology class $[\beta]$ classifies the group extension. 
We can also regard this extension as an extension of partial groups. 
Let $A\times_\beta \bar K$ denote the group whose elements consists of pairs $(a,\bar k)$ together with the multiplication rule twisted by $\beta$:
\begin{equation}\label{eq:product beta twisted}
(a_1,\bar k_1)\cdot (a_2,\bar k_2) = (a_1+a_2+\beta(\bar k_1,\bar k_2),\bar k_1 \bar k_2).
\end{equation}
%\is{The same equation holds for $\beta$ in partial groups, it should be mentioned and labeled for future reference}
Then there is a commutative diagram
\begin{equation}\label{dia:group extension as partial group}
\begin{tikzcd}[column sep=huge, row sep=large]
NA \arrow[r,equal] \arrow[d,"i"] & NA \arrow[d] \\
NA\times_\beta N\bar K \arrow[d,"\pi"] \arrow[r,"\cong"] & N(A\times_\beta \bar K) \arrow[d]\\
N\bar K \arrow[r,equal] & N\bar K
\end{tikzcd}
\end{equation}
The section $\eta$ also defines a pseudo-section of the left-hand partial group extension, which by a slight abuse of notation, we still denote by $\eta:N\bar K\to N\ZZ_d\times_\beta N\bar K$. {(Pseudo-section means that the map is compatible with the simplicial structure maps except $d_0$.)} 
{The isomorphism is given by 
$$
((a_1,\cdots,a_n),(\bar k_1,\cdots,\bar k_n)) \mapsto ((a_1,\bar k_1),\cdots,(a_n,\bar k_n))
$$}

%Now pulling-back the diagram in (\ref{dia:group extension as partial group}) along the inclusion $\bar N(\ZZ_d,K)\to N\bar K$ we obtain the following result. We will still write $\beta$ for the pull-back of the cocycle $\beta$ along this inclusion. 
{For an arbitrary central partial group extension we will allow the base space to be a partial group: $NA\to NA \times_\beta M \to M$ where $M$ is a partial group. Equation (\ref{eq:product beta twisted}) also holds for the $1$-simplices of the total space as the $\Pi_2$ map coincides with the $d_1$-face of the twisted product.}
 
\Lem{\label{lem:beta zero iff pi splits iff i splits}
For a central partial group extension
$$
NA \xrightarrow{i} NA \times_\beta M \xrightarrow{\pi} M 
$$
the following are equivalent:
\begin{enumerate}
\item $[\beta]=0$ in $H^2(M,A)$.
\item The map $i$ splits.
\item The map $\pi$ splits. 
\end{enumerate} 
}
\Proof{
The class $[\beta]$ is zero if and only if 
the twisted product is isomorphic to the trivial one $E\cong NA\times M${; see for example \cite[App. B]{chung2023simplicial}}.
Both (2) and (3) are equivalent to this latter condition.
%The cohomology class $\beta$ classifying the fibration is obtained as the image of $\idy:A\to A$ under the transgression map: We have an exact sequence
%$$
%H^1(E,A) \xrightarrow{i^*} H^1(NA,A) \xrightarrow{d_2} H^2(M,A)
%$$   
%such that $d_2(\idy)=[\beta]$. By exactness $[\beta]=0$ if and only if $i$ has a section. On the other hand, $[\beta]=0$ if and only if the twisted product is isomorphic to the trivial one $E\cong NA\times M$. Thus equivalently $\pi$ splits.
}

\subsection{Group action on partial groups}
\label{sec:Group action on partial groups}

Let $G$ be a group. Consider the central partial group extension
$$
NA \to E \xrightarrow{\pi} M
$$ 
where $E=NA\times_\beta M$.
Our group will act on the total space in such a way that its restriction to the fiber is trivial and moreover the action is compatible with the $NA$-action on the total space. Note that this fiber bundle is in fact a principal $NA$-bundle: the fiber is not just a partial group, it is a simplicial abelian group. More precisely, the action is given by  a homomorphism $\varphi:G\to \Aut(NA\times_\beta M)$, which we simply denote by $g\cdot x = \varphi(g)(x)$, such that 
\begin{itemize}
\item $g\cdot a = a$ for all $a\in (NA)_n$, and
\item $g\cdot(a\cdot x) = a\cdot (g\cdot x)$. 
\end{itemize}
{Thus $G$ acts on $E$ by partial group automorphisms fixing $NA$.}
Here $NA$ acts freely on the total space by left multiplication on the first coordinate of the twisted product.
Then we have
\begin{equation}\label{eq:action g and Phi}
g\cdot (a,m) = (a+\Phi_g(g\cdot m),g\cdot m).
\end{equation}
for some function $\Phi_g:M_1\to A$.  We will think of this as a function $\Phi:G\times M_1\to A$. Our approach will be to regard this as a cochain in a double complex.

First we begin with the partial group extension associated to the group action:
\begin{equation}\label{eq:Group action extension of partial monoids}
NA \to NG\ltimes E \xrightarrow{\tilde \pi} NG\ltimes M
\end{equation}
where $\tilde{\pi}\colon NG\ltimes E\to NG\ltimes M$ is defined on $1$-simplices by 
$\tilde{\pi}((g,x))=(g,\pi(x))$.
The associated twisting pair $(t,\beta)$ is given by $t=\varphi$ and $\beta=0$.
%\comm{@Igor: Proposition showing that this is a partial group extension is removed since this is rather trivially follows from the twisting pair description.}\is{Tamam}
%\Pro{\label{pro:Group action extension of partial monoids}
%There is an extension of partial groups
%$$
%NA \to NG\ltimes E {\xrightarrow{\tilde \pi}} NG\ltimes M
%$$
%{where $\tilde{\pi}\colon NG\ltimes E\to NG\ltimes M$ is defined on $1$-simplices by 
%\[\tilde{\pi}((g,x))=(g,\pi(x)).\]}
%}
Our next goal is to identify the partial group extension associated to the group action with the Borel construction.

\begin{lem}
\label{lem:iso-Borel-and-semidirect-product}
There is an isomorphism of simplicial sets
\[
M \hquo G\cong NG\ltimes M
\]
given by  $S: M\hquo G \to NG\ltimes M$ defined by
\begin{align*} 
[(g_0,g_1,\ldots,g_n),(m_1,\cdots,m_n)]&\mapsto ((g_0^{-1}m_1,g_1),(g_1^{-1}g_0^{-1}m_2,g_2),\cdots,((g_{n-1}g_{n-2}\cdots g_2g_1)^{-1}g_0^{-1}m_n,g_n)]
\end{align*}
and $T: NG\ltimes M \to M\hquo G$ defined by
\begin{align*} 
((m_1,g_1),\cdots,(m_n,g_n))&\mapsto [(1,g_1,\cdots,g_n),(m_1,g_1m_2,\cdots,g_{n-1}\cdots g_2g_1m_n)].
\end{align*}
%\comm{@Igor: Notation has slightly been modified.}\is{Tabii ki}
\end{lem}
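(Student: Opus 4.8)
The plan is to show that the explicit maps $S$ and $T$ are well defined, mutually inverse in each degree, and compatible with the simplicial structure, so that together they exhibit $M\hquo G$ and $NG\ltimes M$ as isomorphic. I would first record the normal form for simplices of the Borel construction: since $G$ acts on $EG$ only through the first coordinate, every class $[(g_0,\ldots,g_n),x]$ has a unique representative with $g_0=1$, obtained by acting with $g_0^{-1}$. With this in hand, well-definedness of $S$ is immediate, since replacing $(g_0,\ldots,g_n)$ by $(hg_0,g_1,\ldots,g_n)$ and $x$ by $h\cdot x$ multiplies each prefix product appearing in $S$ by $h$ on the left and acts on each $m_i$ by $h$, and the leading factor $(\cdots)^{-1}g_0^{-1}$ absorbs exactly this $h$; thus $S$ is constant on $G$-orbits. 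The map $T$ manifestly outputs the $g_0=1$ representative, so it too is well defined.

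Next I would check $S\circ T=\id$ and $T\circ S=\id$ degreewise. These are purely formal substitutions: the prefix products that $T$ inserts into the $M$-coordinates are precisely inverted by the prefix products that $S$ divides out (together with the $g_0=1$ normalization), so every $m_i$ returns unchanged while the $G$-coordinates are untouched. The only delicate point is keeping the order of the group products and the direction of the $G$-action consistent throughout; once the convention is fixed this reduces to a one-line cancellation in each degree.

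The substantive step, and the one I expect to be the main obstacle, is simpliciality. Here I would exploit that the target $NG\ltimes M$ is a partial monoid: by Lemma \ref{lem:maps into partial monoids and homotopy}(1) a map into it is determined by its value on $1$-simplices, and, in the spirit of Corollary \ref{cor:maps to NM}, a $1$-truncated assignment extends to a simplicial map exactly when it respects products, i.e.\ when $S_1(d_1\tau)=S_1(d_2\tau)\cdot S_1(d_0\tau)$ for every $2$-simplex $\tau$, where the right-hand product is the semidirect multiplication $(m,g)\cdot(m',g')=(m\,(g\cdot m'),\,gg')$ of $NG\ltimes M$ coming from the twisting pair $(t,\beta)=(\varphi,0)$. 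On a normalized $2$-simplex $[(1,g_1,g_2),x]$ with spine $(m_1,m_2)$ one computes $S_1(d_2\tau)=(m_1,g_1)$ and $S_1(d_1\tau)=(m_1\cdot m_2,g_1g_2)$, while the $EG$-face merges $g_0g_1$ and forces a renormalization by $g_1^{-1}$, giving $S_1(d_0\tau)=(g_1^{-1}\cdot m_2,g_2)$; substituting into the product rule yields $(m_1,g_1)\cdot(g_1^{-1}\cdot m_2,g_2)=(m_1\cdot m_2,g_1g_2)=S_1(d_1\tau)$, as required. Hence $S_1$ extends uniquely to a simplicial map, and one then verifies that the explicit higher-degree formula agrees with this extension by reading off spines.

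Finally, since $S$ is a simplicial map that is a bijection in each degree with explicit inverse $T$, it is an isomorphism of simplicial sets and $T=S^{-1}$ is automatically simplicial; this spares us from checking the twisted face and degeneracy identities for $T$ directly. The hardest part throughout is controlling the interplay of three effects: the merging of adjacent group coordinates under the $EG$-faces, the renormalization back to the $g_0=1$ representative, and the $G$-action on the $M$-coordinates. In particular one must ensure that the order of the prefix products in the definitions of $S$ and $T$ is matched to the chosen convention for the semidirect product; this is precisely where the twisting enters, and where a convention slip would break the face identities.
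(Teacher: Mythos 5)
Your well-definedness argument, the mutual-inverse check, and the degree-$2$ product computation are all correct, and they reproduce part of what the paper's (unwritten) ``direct verification'' consists of. The gap is in the step you yourself call substantive: the claim that a degree-$1$ assignment into $NG\ltimes M$ satisfying $S_1(d_1\tau)=S_1(d_2\tau)\cdot S_1(d_0\tau)$ for all $2$-simplices $\tau$ extends to a simplicial map. The paper establishes this criterion only for targets of the form $NM$ with $M$ a genuine monoid (Corollary \ref{cor:maps to NM}), where it rests on $2$-coskeletality of the nerve. A general partial monoid is not $2$-coskeletal, and the criterion is false for such targets: in $N(\ZZ_d,K)$ of Definition \ref{def:partial group N(Zd,K)}, for instance, a tuple of $1$-simplices is the spine of an $n$-simplex only if \emph{all} of its entries pairwise commute, which the consecutive-pair conditions coming from $2$-simplices do not guarantee. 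Lemma \ref{lem:maps into partial monoids and homotopy}(1) gives you uniqueness of an extension, never existence. To get existence you must prove a spine-filling statement in every degree --- for each $n$-simplex of $M\hquo G$ with spine $(e_1,\dots,e_n)$, the tuple $(S_1(e_1),\dots,S_1(e_n))$ is the spine of an $n$-simplex of $NG\ltimes M$ --- and then check that the resulting simplex is the one given by the displayed formula. That is exactly the higher-degree verification you defer to ``reading off spines,'' and it is the actual content of the lemma.

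The omission is material rather than bureaucratic, because degree $\leq 2$ is blind to the one thing that can go wrong here: the order of the prefix products. In degree $2$ both $(g_{i-1}\cdots g_1)^{-1}$ and $(g_1\cdots g_{i-1})^{-1}$ collapse to $g_1^{-1}$, so your check passes either way. In degree $3$ they differ, and only one of them satisfies the face identities. Concretely, with the paper's conventions for $EG$ and the product $(m,g)\cdot(m',g')=(m\,(g\cdot m'),\,gg')$ from Remark \ref{rem:isomorphism to semidirect product}: for $\tau=[(1,g_1,g_2,g_3),x]$ with spine $(x_1,x_2,x_3)$ one has $d_0\tau=[(1,g_2,g_3),(g_1^{-1}x_2,\,g_1^{-1}x_3)]$, so the second entry of $S(d_0\tau)$ is $(g_2^{-1}g_1^{-1}x_3,g_3)$, while dropping the first spine entry of $S(\tau)$ as displayed gives $((g_2g_1)^{-1}x_3,g_3)=(g_1^{-1}g_2^{-1}x_3,g_3)$; these agree only when $g_1^{-1}g_2^{-1}$ and $g_2^{-1}g_1^{-1}$ act identically on $x_3$. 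So the $d_0$-identity in degree $3$ forces the prefixes to be read as $(g_1g_2\cdots g_{i-1})^{-1}$ --- either the displayed order or the ambient conventions must be adjusted. This is precisely the ``convention slip'' you warn about in your last sentence, and it is detectable only by the full-degree verification; the proposed reduction to degrees $\leq 2$ is structurally unable to catch it.
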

\begin{proof}
Follows from direct verification.
%Note that the map $S$ is well-defined in that the values of $G$-orbits are the same. Secondly, the simplex appearing as a value of $S$ is indeed a simplex in $NG\ltimes M$. Since $G$ acts by isomorphisms on $M$, the simplex $[g_0^{-1}m_1|\ldots|g_0^{-1}m_n]$ is a well defined $n$-simplex in $M$. Therefore, using the definition of the semidirect product of partial groups we see that 
%\[[(g_0^{-1}m_1,g_1)|(g_1^{-1}g_0^{-1}m_2,g_2|\ldots|((g_{n-1}g_{n-2}\ldots g_2g_1)^{-1}g_0^{-1}m_n,g_n)]\] is a well-defined $n$-simplex in $NG\ltimes M$.

%Direct computations show that $S\circ T=\idy$ and $T\circ S=\idy$, and that both maps are actually simplicial set maps. Thus the claim follows.
\end{proof}

We note that the isomorphism given in the lemma above can be extended to   extensions of partial groups.

\begin{pro}\label{pro:Borel construction as a semidirect product}
There is a commutative diagram of partial group extensions
$$
\begin{tikzcd}[column sep=huge,row sep =large]
NA \arrow[r] \arrow[d,equal] & E\hquo G \arrow[d,"\cong"] \arrow[r,"p"] & M\hquo G \arrow[d,"\cong"] \\
NA \arrow[r] & NG\ltimes E \arrow[r, "\tilde{\pi}"] & NG\ltimes M
\end{tikzcd}
$$

\end{pro}
%\is{I named the maps in the diagram above.}
\Proof{
We need only to prove that the right-hand square indeed commutes. Note that the vertical isomorphisms are given by $S$ in Lemma \ref{lem:iso-Borel-and-semidirect-product}.
Let $[(1,g_1,\cdots,g_n),(x_1,\cdots,x_n)]\in (E\hquo G)_n$. Then we have that
\begin{align*}
S\circ p([(1,g_1,\cdots,g_n),(x_1,\cdots,x_n)])&= S((1,g_1,\ldots,g_n),\pi(x))\\
&=((g_1,\pi(x_1)),(g_2,g_1^{-1}\pi(x_2)),\cdots,(g_n, (g_{n-1}g_{n-2}\cdots g_{1})^{-1}\pi(x))).
\end{align*}
On the other hand,
\begin{align*}
\tilde{\pi}\circ S([(1,g_1,\cdots,g_n),(x_1,\cdots,x_n)])&=\tilde{\pi}((g_1,x_1),(g_2,g_1^{-1})x_2,\cdots,(g_n, (g_{n-1}g_{n-2}\cdots g_{1})^{-1}x))\\
&=((g_1,\pi(x_1)),(g_2,g_1^{-1}\pi(x_2)),\cdots,(g_n, ((g_{n-1}g_{n-2}\cdots g_{1})^{-1}\pi(x))).
\end{align*}
Thus the claim follows.
%\comm{@Igor: notation revised, check against typos}\is{I think that "pi's" were misplaced. I corrected them in the equations above.}
}

Recall that $M\hquo G$ is the diagonal of the bisimplicial set $S_G(M)_{p,q}=(EG)_p\times_G M_q$ {defined in Equation (\ref{eq:bisimplicial Borel construction})}. We regard $\Phi$ as a $(1,1)$-cochain in the double complex $C^*( \ZZ[S_G(M)] ,A)$. Similarly we can regard $\beta:M_2\to A$ as a $(0,2)$-cochain in the same complex.  
 
\Lem{
The function $\Phi$ satisfies: 
\begin{enumerate}
\item $d^v\Phi = d^h\beta$,
 
\item $d^h\Phi=0$.
\end{enumerate}
} 
\begin{proof}
Automorphisms of $E=NA\times_\beta M$ are determined by their restriction to $1$-simplices (part (1), Lemma \ref{lem:maps into partial monoids and homotopy}). 
We have a commutative diagram
$$
\begin{tikzcd}[column sep=huge,row sep =large]
(NA\times_\beta M)_2 \arrow[d,"g\cdot"] \arrow[r,"d_2\times d_0"] & (NA\times_\beta M)_1 \times (NA\times_\beta M)_1 \arrow[d,"g\cdot "] \\
(NA\times_\beta M)_2 \arrow[r,"d_2\times d_0"] & (NA\times_\beta M)_1 \times (NA\times_\beta M)_1
\end{tikzcd}
$$

By commutativity of the diagram we obtain
$$
g\cdot ((0,0),\sigma) = ( (\Phi_g(g\cdot d_2\sigma), \Phi_g(g\cdot d_0\sigma)+\beta(\sigma)-\beta(g\cdot \sigma) ), g\cdot \sigma  ).
$$
Now, $g\cdot$ is a simplicial set map and since the twisted product is determined by a $2$-cocycle it suffices to require compatibility with the face maps $d_i$ from $2$-simplices.  Compatibility with $d_1$ requires that
$$ 
d_1(g\cdot ((0,0),\sigma)) =   ( \Phi_g(g\cdot d_2\sigma)+ \Phi_g(g\cdot d_0\sigma)+\beta(\sigma)-\beta(g\cdot \sigma) , d_1(g\cdot \sigma)  )
$$
is equal to
$$
g\cdot(d_1 ((0,0),\sigma) ) = (\Phi_g(g\cdot d_1\sigma), g\cdot (d_1\sigma) )
$$
which gives $d^v\Phi = d^h\beta$. 
%\comm{@Igor: please check, could be $d^v\Phi=-d^h\beta$. Igor says that the current version is good. TO BE CONFIRMED.} 
A similar computation shows that $g\cdot $ is compatible with $d_0$ and $d_2$. Finally $d^h\Phi=0$ follows from the requirement that $g\cdot (h\cdot ((0,0),\sigma)) = gh\cdot ((0,0),\sigma)$.
Indeed, using Equation (\ref{eq:action g and Phi}) we have that
\[
\begin{aligned}
g\cdot (h\cdot ((0,0),\sigma)) &=g\cdot (\Phi_h(h\cdot \sigma),h\cdot \sigma)=(\Phi_g(gh\cdot \sigma)+\Phi_h(h\cdot \sigma), gh\cdot \sigma) \\
&= (\Phi_{gh}(gh\cdot \sigma), gh\cdot \sigma)\\
&= gh\cdot ((0,0),\sigma).
\end{aligned}
\]
This gives the identity
%On the other hand,
%\[
%gh\cdot ((0,0),\sigma)=(\Phi_{gh}(gh\cdot \sigma), gh\cdot \sigma). 
%\]
%By the definition of the group action we obtain that
\[
\Phi_g(gh\cdot \sigma)+\Phi_h(h\cdot \sigma)=\Phi_{gh}(gh\cdot \sigma).
\]
from which the result follows: 
$
d^h\Phi(g,h,\sigma)=\Phi_g(gh\cdot \sigma)+\Phi_h(h\cdot \sigma)-\Phi_{gh}(gh\cdot \sigma)=0
$.
%Since this equality holds for all $g,h\in G$ and $\sigma\in M_1$, the claim follows.
\end{proof}

Let $\eta:M\to NA\times_\beta M$ denote the pseudo-section defined by $\eta(x)=(0,x)$.  
Then 
%{(writing $\eta$ instead of $\eta_1$ by a slight abuse of notation)} 
we have 
%\comm{below I actually mean $\eta_1$, notation abused, but this is fine for partial groups, we should say that}
\begin{equation}\label{eq:cocycle from eta}
\begin{aligned}
\eta(x) \cdot \eta(y) \cdot \eta(x\cdot y)^{-1}   &=  (0,x)\cdot (0,y) \cdot (0,x\cdot y)^{-1} \\
&=  (\beta(x,y),x\cdot y) \cdot (-\beta(x\cdot y,x^{-1}\cdot y^{-1}),y^{-1}\cdot x^{-1})  \\
&=  (\beta(x,y) ,1)  
\end{aligned}
\end{equation}
and
\begin{equation}\label{eq:phi from eta}
\begin{aligned}
(g\cdot \eta(g^{-1}\cdot x))\cdot \eta(x)^{-1}  & = (g\cdot (0,g^{-1}\cdot x)) \cdot (0,x)^{-1} \\
&=  (\Phi_g(x),x) \cdot (-\beta(x,x^{-1}),x^{-1}) \\
&= (\Phi_g(x),1).
\end{aligned}
\end{equation}

\Lem{\label{lem:Phi zero iff equivariant section}
There exists $s:M_1\to A$ such that $d^hs=\Phi$ if and only if $\pi$ admits a $G$-equivariant pseudo-section.
}
\Proof{ 
We firstly note that the cohomology class of $\Phi$ does not depend on the choice of the pseudo-section $\eta$. Indeed, let $\eta_1$ and $\eta_2$ be two pseudo-sections and let $\Phi^{(1)}$ and $\Phi^{(2)}$ be the corresponding functions defined using Equation \eqref{eq:phi from eta}. Since for every $x\in M_1$ we have that $\pi(\eta_1(x))=\pi(\eta_2(x))$, there exists a function $\alpha\colon M_1\to A$ such that $\eta_2(x)=\alpha(x)\eta_1(x)$. Recall that we assume that the action of $G$ on $A$ is trivial. Therefore we have that for any $g\in G$, $x\in M_1$
\begin{align*}
\Phi^{(2)}(g,x)&=g\cdot\eta_2(g^{-1}x)\eta_2(x)^{-1}\\
&= \alpha(g^{-1}x)g\cdot\eta_1(g^{-1}x)\alpha(x)^{-1}\eta_1(x)^{-1}\\
&=\alpha(g^{-1}x)-\alpha(x)+{\Phi^{(1)}(g,x)}.
\end{align*}
%\is{Note that the in the last row should be $\Phi^{(1)}(g,x)$ (superscript). Also, I removed "therefore" at the beginning of the next sentence, as the previous one start with this word.}
We obtain that the difference of the cocycles $\Phi^{(1)}$ and $\Phi^{(2)}$ is a coboundary of the cochain $\alpha$. So the class of $\Phi$ does not depend on the choice of the pseudo-section.

From Equation \eqref{eq:phi from eta} it follows directly that if $\eta$ {is} $G$-equivariant, then the cocycle $\Phi$ become zero. By the observation above we see that if there is any equivariant pseudo-section, then the cocycle $\Phi$ differs from zero by a coboundary, thus its cohomology class is zero.

Assume now that $[\Phi]=0$, that is, there exists a function $a\colon M_1\to A$ such that  for every $g\in G$ and $x\in M_1$ we have that
\[
a(g^{-1}x)-a(x)=g\cdot\eta(g^{-1}x)\eta(x)^{-1}.
\]
Now define the function $\lambda\colon M_1\to A$ by $\lambda(x)=\eta(x)a(x)^{-1}$. From the equation above and the fact that $\pi$ maps $A$ to the trivial partial subgroup of $M$ it follows that $\lambda$ is an equivariant pseudo-section. Thus the claim follows.
}

{T}he pseudo-section $\eta_G:M\hquo G \to E\hquo G$ defined by 
$$\eta_G([(1,g_1,\cdots,g_n),x)]) =[(1,g_1,\cdots,g_n),\eta(x)]$$
gives a $2$-cocycle $\beta_G: (M\hquo G)_2\to A$ defined by a formula similar to Equation (\ref{eq:cocycle from eta})
\begin{equation}\label{eq:cocycle from eta tilde}
[(1,1),\beta_G(\sigma,x,y)] =  \eta_G(x) \cdot \eta_G(y) \cdot \eta_G(x\cdot y)^{-1},
\end{equation}
where $[\sigma,(x,y)]\in (EG)_2\times_G M_2$, such that the principal $NA$-bundle in  (\ref{eq:Group action extension of partial monoids}) is classified by $[\beta_G]$.

\Rem{\label{rem:isomorphism to semidirect product}
{\rm
Using Lemma \ref{lem:iso-Borel-and-semidirect-product} we can identify  $M\hquo G \cong NG\ltimes M$.
%$(\sigma,(x_1,x_2))\in (M\hquo G)_2$ as an element of $NG\ltimes M$ using Lemma \ref{lem:iso-Borel-and-semidirect-product}. 
Thus for $\sigma=(1,g_1,g_2)\in (EG)_2$  the element $[(\sigma,(x_1,x_2))]$ is represented by $((g_1,x_1),(g_2,g_1^{-1}x_2))\in (NG\ltimes E)_2$. 
The function $\eta_G\colon NG\ltimes M\to NG\ltimes E$ is then given by

\[
\eta_G((g_1,x_1),\cdots,(g_n,x_n))=((g_1,\eta(x_1)),\cdots,(g_n,\eta(x_n))).
\]
%\is{Note: I removed unnecessary bracket in the equation above}
Using this we obtain
%\begin{equation}\label{eq:beta-G explicit}
%(1,\beta_G(\sigma,x,y)) = \eta_G(x)\cdot \eta_G(y)\cdot \eta_G(x\cdot y)^{-1}.
%\end{equation}
%\comm{@Igor: I think expanding the RHS gives something useful. Could you do this? Then the proof of the lemma below will be simpler.}
%The extension of the definition of $\beta_G$ would go as follows:
\begin{equation}\label{eq:beta-G}
\begin{aligned}
{(1,\beta_G((g_1,x_1),(g_2,x_2))} 
&=\eta_G(g_1,x_1)\cdot\eta_G(g_2,x_2)\cdot\eta_G(g_1g_2, x_1\cdot(g_1x_2))^{-1}\\
&= (g_1,\eta(x_1))\cdot (g_2,\eta(x_2))\cdot (g_1g_2, \eta(x_1\cdot(g_1x_2)))^{-1}\\
&= (g_1g_2,\eta(x_1)\cdot(g_1\eta(x_2)))\cdot((g_1g_2)^{-1},(g_1g_2)^{-1}\cdot \eta(x_1\cdot(g_1x_2))^{-1})\\
&= (1, \eta(x_1)\cdot(g_1\eta(x_2))\cdot\eta(x_1\cdot(g_1x_2))^{-1}).
\end{aligned}
\end{equation}

%Using this we obtain
%$$
%\begin{aligned}
%(1,\beta_G((g_1,x_1),(g_2,x_2))) &= \eta_G(g_1,x_1) \cdot \eta_G(g_2,x_2) \cdot \eta_G(g_1g_2, (g_2\cdot x_1)\cdot  x_2)^{-1}  \\
%&= (g_1,\eta(x_1)) \cdot (g_2,\eta(x_2)) \cdot (g_1g_2, \eta((g_2\cdot x_1)\cdot  x_2))^{-1} \\
%&= (g_1g_2,(g_2\cdot\eta(x_1))\cdot \eta(x_2)) \cdot (g_1g_2, \eta((g_2\cdot x_1)\cdot  x_2))^{-1} 
%\end{aligned}
%$$
}
}

\Lem{\label{lem:description of class of the Borel construction}
The cocycle $\beta_G$ is represented by the triple $(0,\Phi,\beta)$ in the total complex.
}
\Proof{
%\comm{CO: revise}
%We will regard the simplex $(\sigma,(x_1,x_2))\in (M\hquo G)_2$ as an element of $NG\ltimes M$ using Lemma \ref{lem:iso-Borel-and-semidirect-product}. 
%Thus for $\sigma=(1,g_1,g_2)\in (EG)_2$  the element $[(\sigma,(x_1,x_2))]$ is represented by $((g_1,x_1),(g_2,g_1^{-1}x_2))\in (NG\ltimes E)_2$. 
%The function $\eta_G\colon NG\ltimes M\to NG\ltimes E$ is then defined by
%\[
%\eta_G((g_1,x_1),\cdots,(g_n,x_n)])=((g_1,\eta(x_1)),\cdots,(g_n,\eta(x_n))).
%\]
We will use the identification in Remark \ref{rem:isomorphism to semidirect product} {and Equation (\ref{eq:beta-G})}.
The claim follows from the formulas given in Equation \eqref{eq:nabla2 formula}.
For the first factor, recall that $M$ is a reduced simplicial set (see Definition \ref{def:partial monoid}) and for the unique vertex $\ast\in M_0$ we have $s_0(\ast)=1\in M_1$. Thus for $\sigma=(1,g_1,g_2)\in EG_2$ we compute that
\begin{align*}
\alpha(\sigma,\ast))&=\beta_G(\sigma,s_1s_0(\ast))\\
&=\beta_G((g_1,1),(g_2,1))\\
&= (1,\eta(1)\cdot(g_1\eta(1))\cdot\eta(1\cdot (g_1\cdot 1))^{-1})\\
&= (1,0).
\end{align*} 
We obtain that $\alpha(\sigma,\ast)=0$.
The second factor comes from the following computation for $g\in G$ and $x\in M_1$:
\begin{align*}
\alpha'((1,g),x)&=\beta_G((1,g,1),s_0 x)-\beta_G((1,1,g),s_1x)\\
&=\beta_G((g,1),(1,g^{-1}x)))-\beta_G((1,x),(g,1))\\
&=(1,\eta(1)\cdot(g_1\eta(g^{-1}x))\cdot\eta(1)^{-1})-(1,\eta(x)\cdot \eta(1)\cdot \eta(x)^{-1})\\
&= (1,\eta(1)\cdot(g_1\eta(g^{-1}x))\cdot\eta(1)^{-1})\\
&=(1,\Phi_g(x)).
\end{align*}
Therefore $\alpha'((1,g),x)=\Phi_g(x)$. Finally the third factor can be computed as
\begin{align*}
\alpha''[(1),(x_1,x_2)] &= \beta_G((1,1,1),(x_1,x_2)) \\
&= \beta_G((1,x_1),(1,x_2)) \\ 
&= (1,\eta(x_1)\cdot\eta(x_2)\cdot\eta(x_1x_2)^{-1})\\
&= (1,\beta(x_1,x_2)).
\end{align*}
}

%\comm{
%@Igor: I think your observation that $\phi=0$ as a horizontal cocycle iff there exists an equivariant pseudo-section can be added somewhere here.
%}

\Thm{\label{thm:betaG zero iff there exists s iff equivariant section}
{Let $NA \xrightarrow{x} E \xrightarrow{\pi} M$ be a central partial group extension and $G$ be a group acting on $E$ partial group {automorphisms} that fix $NA$.}
The following are equivalent:
\begin{enumerate}
\item The class $[\beta_G]$ is zero. 
%\is{Period was misplaced}
\item There exists $s:M_1\to A$ such that $d^vs=\beta$ and $d^hs=\Phi$.
\item The map $\pi$ admits a $G$-equivariant section.
\end{enumerate}
%The class $[\beta_G]$ is zero  if and only if there exists $s:M_1\to A$ such that $d^vs=\beta$ and $d^hs=\Phi$.
}
\Proof{
We begin by showing the equivalence of (1) and (2).
By Lemma \ref{lem:description of class of the Borel construction} $[\beta_G]=0$ if and only if $[(0,\Phi,\beta)]=0$ in the total complex. This condition is equivalent to the existence of $s:M_1\to A$ and $r:G\to A$ such that $d(r,s)=(0,\Phi,\beta)$. Unraveling the coboundary in the total complex gives
$$
d(r,s) = (d^hr, d^hs,d^vs)
$$
where we used $d^vr=0$. Thus $[(0,\Phi,\beta)]=0$ if and only if $d^hr=0$, $d^hs=\Phi$ and $d^vs=\beta$. We obtain the desired result since we can take $r=0$.

Equivalence of (2) and (3) follows from Lemma \ref{lem:beta zero iff pi splits iff i splits} and Lemma \ref{lem:Phi zero iff equivariant section}.
}

%\comm{A remark saying that we will write
%$$
%\begin{aligned}
%\beta(x,y) & = \eta(x)\cdot \eta(y) \cdot \eta(x\cdot y)^{-1} \\
%\Phi_g(x) & = (g\cdot \eta(g^{-1}\cdot x)) \cdot \eta(x)^{-1} \\
% \beta_G(x,y) & = \eta_G(x)\cdot \eta_G(y) \cdot \eta_G(x\cdot y)^{-1} 
%\end{aligned}
%$$
%Imo, in the proofs above having the full version is cleaner. But it is unnecessary to carry over the trivial parts in the examples below. And this makes the connection to TopPar paper nicer.
%}

{
\Rem{
{\rm
In practice we will not carry over the trivial part in the Equations (\ref{eq:cocycle from eta}-\ref{eq:beta-G}) and simply write
$$
\begin{aligned}
\beta(x,y) & = \eta(x)\cdot \eta(y) \cdot \eta(x\cdot y)^{-1} \\
\Phi_g(x) & = (g\cdot \eta(g^{-1}\cdot x)) \cdot \eta(x)^{-1} \\
 \beta_G(x,y) & = \eta_G(x)\cdot \eta_G(y) \cdot \eta_G(x\cdot y)^{-1} .
\end{aligned}
$$

}
}
}

\subsection{Comparison to cofibration}

There is a canonical comparison diagram
\begin{equation}\label{dia:comparison cofib fib}
\begin{tikzcd}[column sep=huge,row sep =large]
NA \arrow[r,equal] \arrow[d,"i"] & NA \arrow[d,"i"] \\
E \arrow[r,equal] \arrow[d,"q"] & E \arrow[d,"\pi"] \\
\bar E \arrow[r,"\alpha"] &  M 
\end{tikzcd}
\end{equation}
where the left sequence is a cofiber sequence and right sequence is a central extension of partial groups.

\Lem{\label{lem:beta pullbacks to gamma}
We have $\alpha^*([\beta])=[\gamma]$.
}
\Proof{
Consider the Serre spectral sequence for the right vertical sequence {in Diagram (\ref{dia:cofibrations compare})} with coefficients in $A$.
% in the comparison diagram. 
Let $\id$ be the identity homomorpshim of $A$. Then we have that $[\id]\in H^1(A,A)$ and that $d_2([\id])=[\beta]${; see \cite{adem2013cohomology}}. This differential is the transgression homomorphism and it is defined by the following diagram {(see \cite[Sec. 6.2]{mccleary2001user})}:
\[
\begin{tikzcd}[ampersand replacement = \&,column sep=huge,row sep =large]
0\rar\dar\& H^2(M,A)\rar\dar["\alpha^\ast"]\& H^2(M,A)\rar\dar["\pi^\ast"]\& 0\\
H^1(A,A)\rar["\zeta"]\& H^2(\bar{E},A)\rar\& H^2(E,A)\&
\end{tikzcd}
\]
This diagram comes from the map of cofibrations:
\[
\begin{tikzcd}[ampersand replacement=\&,column sep=huge,row sep =large]
NA \arrow[r] \arrow[d,"i"] \& \ast \arrow[d] \\
E \arrow[r,,"\pi"] \arrow[d] \& M \arrow[d,equal] \\
\bar E \arrow[r,"\alpha"] \&  M 
\end{tikzcd}
\]
By the definition of the transgression we obtain that $\alpha^\ast([\beta])=\zeta([\id])=[\gamma]$.
}

%\Lem{\label{lem:i splits iff pi splits}
%$i:NA\to E$ splits if and only if $\pi:E\to M$ splits.
%}
%\Proof{
%The cohomology class $\beta$ classifying the fibration is obtained as the image of $\idy:A\to A$ under the transgression map: We have an exact sequence
%$$
%H^1(E,A) \xrightarrow{i^*} H^1(NA,A) \xrightarrow{d_2} H^2(M,A)
%$$   
%such that $d_2(\idy)=[\beta]$. By exactness $[\beta]=0$ if and only if $i$ has a section. On the other hand, $[\beta]=0$ if and only if the twisted product is isomorphic to the trivial one $E\cong NA\times M$. Thus equivalently $\pi$ splits.
%}

\Pro{\label{pro:equivalence section, zero cofib, zero fib}
The following are equivalent.
%\is{I changed period to colon}
\begin{enumerate}
\item $i:NA\to E$ has a section.
\item $[\gamma]=0$ in $H^2(\bar E,A)$.
\item $[\beta]=0$ in $H^2(M,A)$.
\end{enumerate}
}
\Proof{
Equivalence of (1) and (2) follows from Proposition \ref{pro:section iff class zero, nonzero implies contextual} and the equivalence of (1) and (3) follows from Lemma \ref{lem:beta zero iff pi splits iff i splits}.
%\ref{lem:i splits iff pi splits}. 
%We prove the equivalence of (1) and (3): If $i$ admits a section, that is, the pseudo-section $\eta$ can be chosen to be a simplicial set map, then Equation (\ref{eq:cocycle from eta}) implies that $\beta=0$. Conversely, if $[\beta]=0$, i.e., there exists a $1$-cochain $s:M_1\to A$ such that $ds=\beta$, then modifying $\eta$ by $\eta'(x) = s(x)\cdot \eta$ gives a section $\eta'$ of $i$. Note that here we are using part (1) of Lemma \ref{lem:maps into partial monoids and homotopy}.  
%\comm{@Igor: explain the last part, this is similar to equivariant section discussion}
}

\Cor{\label{cor:beta nonzero then contextual}
If $[\beta]\neq 0$ then every simplicial distribution $p:E\to D_R(NA)$ relative to $NA$ is contextual.}
\Proof{
Follows from Proposition \ref{pro:section iff class zero, nonzero implies contextual} and Proposition \ref{pro:equivalence section, zero cofib, zero fib}.
}

We can also apply the discussion of $G$-equivariant contextuality. In this case the comparison diagram is given as follows
\begin{equation}\label{dia:comparison cofib1 cofib 2 fib}
\begin{tikzcd}[column sep=huge,row sep =large]
NA \arrow[r] \arrow[d] & NA\hquo G\arrow[d] \arrow[r] & NA \arrow[d] \\
E\hquo G \arrow[r,equal] \arrow[d] & E\hquo G \arrow[r,equal] \arrow[d] & E\hquo G \arrow[d] \\
\overline{E\hquo G} \arrow[r,"c"]  &\bar E \hquo G \arrow[r,"{\alpha}"] & M\hquo G 
\end{tikzcd}
\end{equation}
We have
$$
\alpha^*([\beta_G])=\tilde\gamma_G\;\;\;\;\text{ and }\;\;\;\; c^*(\tilde\gamma_G)=\gamma_G.
$$
%\is{Maybe it would be useful to add a reference to the previous diagram of this form}

\Lem{\label{lem:pull back along alpha of (0,Phi,beta)}
We have
$$
\alpha^*[(0,\Phi,\beta)] = [(0,\tilde\phi,\gamma)
].
$$
}
\Proof{
{Follows from} 
%Using 
the first {equation} above and the corresponding descriptions in the total complexes.
% we obtain
%$$
%\alpha^*[(0,\Phi,\beta)] = [(0,\tilde\phi,\gamma)
%].
%$$
}

%\comm{comment on the proof}

%Note that we can compute ${\alpha^*(0,\Phi,\beta)}$ in the total complex associated to $\bar E\hquo G$. 
%We know that {$\beta$} pulls back to {$\gamma$}. 
%Let $\Phi=\alpha^*(\phi)$. We have
%$$
%\Phi(g,(a,x)) = \phi_g(x).
%$$
%Then $\alpha^*(0,\phi,{\beta}) = (0,\Phi,{\gamma})$ \comm{not sure if this is useful, I am planning to discuss this in the Dihedral example}.

\Pro{\label{pro:G-equiv section iff section of Borel iff beta-G=0 iff gamma-G=0}
The following are equivalent.
\begin{enumerate}
\item $i:NA\to E$ has a $G$-equivariant section.
\item $i:NA\to E\hquo G$ has a section.
\item $[\beta_G] \in H^2(M\hquo G,A)$ is zero.  
\item $[\gamma_G] \in H^2(\overline{E\hquo G},A)$ is zero.
\end{enumerate}
}
\Proof{
Equivalence of (1), (2) and (4) is by Lemma \ref{lem:equivariant section vs normal section} and Proposition  \ref{pro:section of j vs gamma_G}. Applying Proposition \ref{pro:equivalence section, zero cofib, zero fib} to the leftmost cofibration and the rightmost fibration, which is a partial group extension by Proposition \ref{pro:Borel construction as a semidirect product}, we obtain the equivalence of (3) and (4).
%Proposition
%\ref{pro:equivalence section, zero cofib, zero fib}
%We will show that (1) is equivalent to (3). \comm{@Igor: add the proof of last statement. One direction is clear. The other direction works since the extension is central.} 
}

\Cor{\label{cor:beta-G nonzero then G-equiv contextuality}
If $[\beta_G]\neq 0$ then every 
$G$-equivariant  distribution $p:E\to D_R(NA)$ relative to $NA$ is $G$-equivariantly contextual.
}
\Proof{
Follows from Corollary \ref{cor:class nonzero implies equivariant contexuality} and Proposition \ref{pro:G-equiv section iff section of Borel iff beta-G=0 iff gamma-G=0}.
}

\Rem{\label{rem:equiv contextuality double complex}
{\rm
By Theorem \ref{thm:betaG zero iff there exists s iff equivariant section} we can show that $[\beta_G]\neq 0$ by working in the total complex. 
Part (2) of this corollary can be broken into two steps: We have $[\beta_G]=0$ if and only if  
\begin{enumerate}
\item there exists $s':M_1\to A$ such that $d^vs'=\beta$, and
\item there exists $s'':M_1\to A$ such that $d^vs''=0$ and $d^hs'' = \Phi-d^hs'$.
\end{enumerate}
%\comm{@Igor: check this}\is{This seems ok to me, the next sentence does not} 
Note that {$s=(0,s''+s')$} satisfies $ds=(0,\Phi,\beta)$. 
%\is{To me it should be $s=(0,s''+s')$, as both $s'$ and $s''$ lives in the bidegree $(0,1)$ in the double complex}
We can apply this observation to the case when $[\beta]=0$. Then {$s'$} as specified in (1) exists. Thus $[\beta_G]=0$ if and only if {$s''$} as specified in (2) exists. 
}
}

\subsection{{Example:} Dihedral group}

We will compare the cofibration involving the torus in Sections  \ref{sec:torus-with-involution} and \ref{sec:torus-rel-to-diag} to a partial group extension obtained from the trivial group extension:
\begin{equation}\label{dia:comparison dihedral}
\begin{tikzcd}[column sep=huge,row sep =large]
S^1 \arrow[d] \arrow[r] & N\ZZ_2 \arrow[d] \\
S^1\times S^1 \arrow[r] \arrow[d] & N(\ZZ_2\times \ZZ_2) \arrow[d] \\
\overline{S^1\times S^1} \arrow[r,"\alpha"] & N\ZZ_2
\end{tikzcd}
\end{equation}
The map $\alpha$ sends $x_i\mapsto 1$ where $i=0,1$.
We consider the {$G=\ZZ_2$} action on the middle spaces by swapping the coordinates.
Let us begin with a pseudo-section $\eta:N\ZZ_2\to N(\ZZ_2\times \ZZ_2)$ that sends $a\mapsto (0,a)$ in degree $1$. 
In fact, this is a section and hence
%Since this is a section 
$\beta=0$. In Section \ref{sec:torus-rel-to-diag} we have seen that $[\gamma]=0$. Therefore $\alpha^*([\beta])=[\gamma]$ {(thus verifying Lemma \ref{lem:beta pullbacks to gamma})}. 
%\is{We already proved that this is true. There should be a comment stating that this is an example of Lemma \ref{lem:beta pullbacks to gamma}}
We compute $\Phi$ as follows:
$$
\begin{aligned}
\Phi_a(b) &= a\cdot \eta(-a\cdot b) \cdot \eta(b)^{-1} \\
&= \left\lbrace
\begin{array}{cc}
0   & a=0 \\
b    & a=1.
\end{array}
\right.
\end{aligned}
$$
Then we have
$$
\begin{aligned}
\alpha^*(\Phi)(a,x_i) &= \Phi(a,\alpha(x_i)) \\
&= \Phi(a,1) \\
&= \left\lbrace
\begin{array}{cc}
0    & a=0 \\
1    & a=1
\end{array}
\right. \\
&= \phi(a,1).
\end{aligned}
$$
%\is{Period was misplaced.}
where $\phi$ is given by Equation (\ref{eq:phi-torus}). Using $s$ in Equation (\ref{eq:s}) we observe that 
$$
\alpha^*(0,\Phi,0) = (0,\phi,0) = d(0,s) +(0,\phi,\gamma),
$$
and hence $\alpha^*[(0,\Phi,0)]=[(0,\phi,\gamma)]$ {(this verifies Lemma \ref{lem:pull back along alpha of (0,Phi,beta)})}.

We observe that $(\ZZ_2\times \ZZ_2)\rtimes \ZZ_2$, where the action is the swap, is isomorphic to the dihedral group $D_8$.
The action on the base space $N\ZZ_2$ is trivial. Then the Borel construction {of Diagram (\ref{dia:comparison dihedral})} gives 
$$
\begin{tikzcd}[column sep=huge,row sep =large]
S^1 \arrow[d] \arrow[r] & N\ZZ_2 \arrow[d] \\
(S^1\times S^1) \hquo G \arrow[r] \arrow[d] & ND_8 \arrow[d] \\
\overline{(S^1\times S^1)\hquo G} \arrow[r,"\alpha_G\,\circ\, c"] & N(\ZZ_2\times \ZZ_2)
\end{tikzcd}
$$
We can compute the {cocycle representing the} extension class from the pseudo-section $\eta_G:N(\ZZ_2\times \ZZ_2)\to ND_8$ defined by $\eta_G(c,d)=((0,c),d)$:
%$$
%\begin{aligned}
%\beta_G((c,d),(c',d')) &= \eta_G(c,d) \cdot \eta_G(c',d') \cdot \eta_G(c+c',d+d')^{-1}  \\
%&= ((0,c),d) \cdot ((0,c'),d') \cdot ((0,c+c'),d+d')^{-1} \\
%&= ((0,c)+d\cdot (0,c'),d+d')  \cdot ((d+d')\cdot (0,c+c'),d+d')\\
%&= ((0,c)+d\cdot (0,c')+(0,c+c'),0)  \\
%&=\left\lbrace
%\begin{array}{cc}
%c' & d=1\\
%0 & d=0.
%\end{array}
%\right. 
%\end{aligned}
%$$
%\comm{@Igor: Does Equation (\ref{eq:beta-G}) simplifies this computation now?}
%\is{Slightly. Using this equation, we would have
{\begin{align*}
\beta_G((c,d),(c',d')) &= \eta(c)\cdot (d\eta(c))\cdot \eta(c+(dc'))^{-1}\\
&= (0,c)+d\cdot(0,c')+(0,c+c')\\
&=
\left\{
\begin{array}{cc}
c' & d=1\\
0 & d=0.
\end{array}
\right. 
\end{align*}
Note that we use that the action on the base space is trivial.
}
%}
%\is{Actually, this is not the computation of the extension class, but representing cocycle. A priori we don't know if the class is zero or not.}
Using Equation (\ref{eq:nabla2 formula}) we obtain
$$
\nabla_2(\beta_G) = (0,\Phi,0).
$$
When computing this we also used the isomorphism $ N(\ZZ_2\times \ZZ_2)\cong N\ZZ_2\hquo \ZZ_2$ that sends $((c,d),(c,d))\mapsto [(0,d,d'),(c,c')]$ in degree $2$. To illustrate this we compute the middle term in Equation (\ref{eq:nabla2 formula}):
$$
\begin{aligned}
\alpha'[(0,d),c] &= \beta_G[(0,d,0),(0,c)] - \beta_G[(0,0,d),(c,0)] \\
&= \beta_G[(0,d,0),(0,c)] \\
& = \left\lbrace
\begin{array}{cc}
c  & d=1\\
0 &  d=0
\end{array}
\right. \\
&= \Phi_d(c).
\end{aligned}
$$

\subsection{Quantum distributions}

%\comm{ref to SimpCont and SimpReal}
{We recall some basic constructions from \cite[Section 6]{okay2022simplicial} to study simplicial distributions that come from quantum mechanics.}
Let $\hH$ denote a finite dimensional complex Hilbert space. We will write $\Pos(\hH)$ and $\Proj(\hH)$ for the set of positive operators and the subset of projectors. 
We define a functor $P_\hH:\catSet\to \catSet$:
\begin{itemize}
\item For a set $U$ the set $P_\hH(U)$ of projective measurements on $U$ consists of functions  $\Pi:U\to \Proj(\hH)$ {with finite support, i.e., $|\set{u\in U:\, \Pi(u)=\zero}|<\infty$,} such that $\sum_{u\in U} \Pi(u)=\one_\hH$.
\item Given a function $f:U\to V$ the function $P_\hH(f):P_\hH(U)\to P_\hH(V)$ is defined by
		$$
		\Pi \mapsto \left( v\mapsto \sum_{u\in f^{-1}(v)} \Pi(u) \right). 
		$$  
\end{itemize}
There is an analogous functor $Q_\hH$ involving $\Pos(\hH)$ instead of the projectors. Note that for $\hH=\CC$ this functor coincides with the distribution monad $D$ {introduced in Section \ref{sec:simplicial distributions}}.
%\is{Reference back?}. 
In this paper we will focus on the projective version.
Given a simplicial set $X:\catDelta^\op\to \catSet$ we define $P_\hH(X):X\xrightarrow{X} \catSet \xrightarrow{P_\hH} \catSet$.

\Def{
A simplicial projective measurement on $(X,Y)$ is a simplicial set map $\Pi:X\to P_\hH(Y)$. We will write $\sProj(X,Y)$ for the set of simplicial projective measurements on $(X,Y)$. 
}

Let $\rho$ be a density operator, i.e., $\rho\in \Pos(\hH)$ and $\Tr(\rho)=1$. 
{B}y sending a simplicial measurement $\Pi:X\to P_\hH(Y)$ to the simplicial distribution $\rho_*(p):X\to D(Y)$ defined by 
$$
\sigma \mapsto \left(\,  \theta \mapsto \Tr(\rho\Pi_\sigma(\theta)) \,\right)
$$
{we obtain a simplicial set map
\begin{equation}\label{dia:simplicial Born}
\rho_*:P_\hH(Y) \to D(Y)
\end{equation}
{The trace $\Tr(\rho\Pi_\sigma(\theta))$ is interpreted 
as
the probability of obtaining the outcome $\theta$ and in physics literature this is known as the Born rule.}
%In physics literature 
%the trace $\Tr(\rho \Pi)$ 
%of the product of a density operator and a projector, which is a member of a projective measurement, is called the Born rule. 
The simplicial set map in (\ref{dia:simplicial Born}) {will be referred to as the {\it simplicial Born rule}}.
%is a simplicial version of the Born rule.
}
{Next we introduce a nerve space that has been successful in studying contextuality in the context of quantum theory; see \cite{chung2023simplicial} for applications to operator solutions of linear systems.}
%\is{Here I wonder to what audience the paper will be directed. If more towards mathematicians, some of the things should be explained; for example, what is Born rule or the significance of measurements. Otherwise, it's probably ok, but I would still think that a word on explanation of the Born rule would be in place.}
\Def{\label{def:partial group N(Zd,K)}
{\rm
Let $d\geq 2$ be an integer and $K$ be a group. We define a partial group $N(\ZZ_d,K)$, a simplicial subset of the nerve space $NK$, whose $n$-simplices are given by
$$
N(\ZZ_d,K)_n=\set{ (k_1,k_2,\cdots,k_n):\, k_i^d=1_K,\;\; k_ik_j=k_jk_i\; \forall 1\leq i,j\leq n }.
$$ 
The property $k^d=1_K$ will be
%is
 referred to as the $d$-torsion property.
}
}

Let $U(\hH)$ denote the group of unitary operators.

\Pro{(\!\cite[Proposition 6.3]{okay2022simplicial})
Sending an $n$-tuple $(A_1,A_2,\cdots,A_n)$ of pairwise commuting $d$-torsion unitary operators to the projective measurement $\Pi_A:\ZZ_d^n\to \Proj(\hH)$ obtained by simultaneously diagonalizing the operators gives an isomorphism of simplicial sets
$$
\sd:N(\ZZ_d,U(\hH)) \to P_\hH(N\ZZ_d)
$$
}

We will identify these two simplicial sets. After this identification the simplicial Born rule becomes
$$
\rho_*:N(\ZZ_d,U(\hH)) \to D(N\ZZ_d)
$$ 
Now, consider a group $K$
%Let $K$ be a group 
with a central element $J$ or order $d$. 
%Consider 
We have a
  central group extension
\begin{equation}\label{eq:central group extension J}
1 \to \Span{J} \to K \xrightarrow{\epsilon} \bar K\to 1
\end{equation}

We write $\bar g$ for the image of $g$ under the quotient map.
%We define an extension of partial groups. The base space of the extension is defined as follows:

\Def{\label{def:partial group bar N(Zd,K)}
{\rm
Let $d\geq 2$ be an integer and $K$ be a group with a central element $J$ of order $d$. We define a partial group $\bar N(\ZZ_d,K)$, a simplicial subset of the nerve space $N\bar K$, whose $n$-simplices are given by
$$
\bar N(\ZZ_d,K)_n=\set{ (\bar k_1,\bar k_2,\cdots,\bar k_n)\in \bar K^n:\, k_i^d=1_K,\;\; k_ik_j=k_jk_i\; \forall 1\leq i,j\leq n }.
$$  
}
}

%Note that the theory of extension of partial groups generalize the theory of extension of groups. 
%Consider a set-theoretic section 
%$$\eta:\bar K\to K$$ 
%of $\epsilon$ satisfying $\eta(1_{\bar K})=1_K$.
%Then we can define a cocyle
%$$
%\beta(\bar k_1,\bar k_2) = \eta(\bar k_1) \eta(\bar k_2) \eta(\bar k_1 \bar k_2)^{-1}.
%$$
%The associated cohomology class $[\beta]$ classifies the extension. We can also regard this extension as an extension of partial groups. Let $\ZZ_d\times_\beta \bar K$ denote the group whose elements consists of pairs $(a,\bar k)$ together with the multiplication rule twisted by $\beta$:
%$$
%(a_1,\bar k_1)\cdot (a_2,\bar k_2) = (a_1+a_2+\beta(\bar k_1,\bar k_2),\bar k_1 \bar k_2).
%$$
%Then there is a commutative diagram
%\begin{equation}\label{dia:group extension as partial group}
%\begin{tikzcd}[column sep=huge, row sep=large]
%N\ZZ_d \arrow[r,equal] \arrow[d] & N\ZZ_d \arrow[d] \\
%N\ZZ_d\times_\beta N\bar K \arrow[d] \arrow[r,"\cong"] & N(\ZZ_d\times_\beta \bar K) \arrow[d]\\
%N\bar K \arrow[r,equal] & N\bar K
%\end{tikzcd}
%\end{equation}
%The section $\eta$ also defines a pseudo-section, which by a slight abuse of notation, we still denote by $\eta:N\bar K\to N\ZZ_d\times_\beta N\bar K$. 
The central extension in (\ref{eq:central group extension J}) gives a diagram as in (\ref{dia:group extension as partial group}). Puling back this diagram  along the inclusion $\bar N(\ZZ_d,K)\to N\bar K$ we obtain the following result. We will still write $\beta$ for the pull-back of the cocycle $\beta$ along this inclusion.

\Pro{\label{pro:central extension nerve spaces}
We have a central extension of partial groups
\begin{equation}\label{eq:partial group extension nerve spaces}
N\ZZ_d \xrightarrow{i} N(\ZZ_d,K) \xrightarrow{q} \bar N(\ZZ_d,K)
\end{equation} 
where the total space can 
%also 
be described as the twisted product $N\ZZ_d \times_\beta \bar N(\ZZ_d,K)$.
}
 
There is a way to obtain a simplicial distribution on the total space relative to the fiber that uses representation theory {of  groups}. 
%\is{I think this is an overstatement, as we are using a specific representation here, and we don't use any theory behind it. I would rather say "that uses group representations."}
Consider a unitary representation $\psi:K\to U(\hH)$, i.e., a group homomorphism, such that 
$$
\psi(J) = e^{2\pi i/d} \one.
$$
Then given a quantum state $\rho\in \Den(\hH)$ we can construct a commutative diagram
$$
\begin{tikzcd}[column sep=huge,row sep =large]
N\ZZ_d \arrow[d] \arrow[r,equal] & N\ZZ_d \arrow[d] \arrow[r,equal] &  N\ZZ_d \arrow[d,"\delta"]  \\
N(\ZZ_d,K) \arrow[d] \arrow[r,"\psi_*"] & N(\ZZ_d,U(\hH)) \arrow[d] \arrow[r,"{\rho_*}"] & D(N\ZZ_d) \\
\bar N(\ZZ_d,K) \arrow[r] & \bar N(\ZZ_d,U(\hH)) &
\end{tikzcd}
$$
The composite $\rho_*^\psi= \rho_*\circ\psi_*$ is a simplicial distribution relative to $N\ZZ_d$.

\Def{\label{def:quantum contextuality}
A quantum state $\rho\in \Den(\hH)$ is called (non-)contextual with respect to {$\psi:K\to U(\hH)$} if the simplicial distribution 
\begin{equation}\label{eq:rho psi}
\rho_*^\psi: N(\ZZ_d,K) \to D(\ZZ_d)
\end{equation}
is (non-)contextual.
}

%We can construct a pseudo-section of $q$ from a set-theoretic section 
%$$\eta:\bar K\to K$$ 
%of $\epsilon$ satisfying $\eta(1_{\bar K})=1_K$.
%Denoting the resulting pseudo-section  by $\eta$ and using Equation (\ref{eq:cocycle from eta}) we obtain the class $\beta$.

\Cor{\label{cor:beta non zero rho contextual}
If $[\beta]\neq 0$ then $\rho_*^\psi$ is contextual for all  $\rho\in \Den(\hH)$.
}
\Proof{
{Follows from Corollary \ref{cor:beta nonzero then contextual} since the simplicial distribution $\rho_*^\psi$ is relative to $NA$.}
}

\Rem{
{\rm
In practice we can assume that $\psi$ is injective so that $K$ can be identified with a subgroup of $U(\hH)$. Note that we can always replace $K$ with the image of $\psi$. 
$K$ is called {\it contextual} if $\rho_*^\psi$ is contextual for all $\rho\in \Den(\hH)$.
This notion of contextuality is usually referred to as state-independent contextuality and a topological perspective is first developed in \cite{Coho}. {In this case we say $\rho$ is contextual with respect to $K$ instead of the homomorphism $\psi$.}
%\comm{State independent contextuality make connection to TopParity paper}
}
}

%\comm{Say we now switch to the equivariant version}
Next, we turn to the equivariant case.
Let us assume that $\psi:K\to U(\hH)$ is injective so that we can identify $K$ as a subgroup of the unitary group.
Under this assumption we drop $\psi$ from notation, e.g., we simply write $\rho_*$ for the map in (\ref{eq:rho psi}).
We will consider partial group actions on $N(\ZZ_d,K)$ in the sense of Section \ref{sec:Group action on partial groups}. In practice such actions arise from a subgroup $G$ of the normalizer $N(K)$ of $K$ in the unitary group $U(\hH)$. For the rest of this section we will restrict to actions arising in this way.

\Def{\label{def:equivariant contextuality rho}
A quantum state $\rho$ is called $G$-equivariant if $U\rho U^\dagger =\rho$ for all $U\in G$. 
A $G$-equivariant quantum state is $G$-equivariantly (non-)contextual if $\rho_*$ is $G$-equivariantly (non-)contextual. 
}

\Cor{\label{cor:beta-G non zero rho contextual}
If $[\beta_G]\neq 0$ then  $\rho_*$ is $G$-equivariantly contextual for all $G$-equivariant quantum states $\rho$. %\is{It might be useful to mention that these extensions/scenarios are called "state-independent contextual"}
}
\Proof{
{Follows from Corollary \ref{cor:beta-G nonzero then G-equiv contextuality} since the simplicial distribution $\rho_*^\psi$ is $G$-equivariant and relative to $NA$.}
}

In practice, we will use Remark \ref{rem:equiv contextuality double complex} to show that $[\beta_G]\neq 0$. 
{The situation in Corollary \ref{cor:beta-G non zero rho contextual} is referred to as state-independent contextuality in the physics literature \cite{Coho}.}

\subsection{Example: Mermin star}
\label{sec:Mermin star}

In this section we will construct an example that is of practical interest in {quantum foundations and computing}. {Our example is a topological representation of the well-known Mermin star construction introduced in \cite{mermin1993hidden}. We improve on the earlier topological approach of \cite{Coho} by help of the simplicial approach developed in \cite{okay2022simplicial}.}
 
{Let} $\hH$ {denote} the $n$-fold tensor product $(\CC^2)^{\otimes n}$.
%\is{I think that this figure should be after the subsection header}
{The Pauli $X$ and $Z$ matrices are given by}
$$
X = \left(\begin{matrix}
0 & 1 \\
1 & 0
\end{matrix} \right) \;\;\;\;
Z = \left(\begin{matrix}
1 & 0 \\
0 & -1
\end{matrix} \right).
$$
{The Pauli group} $P_n$ {is} the subgroup of $U(\hH)$ generated by operators of the form {
$$
T_a = i^{a_z\cdot a_x} Z(a_z) X(a_x)
$$
{where $i$ is the imaginary unit,}
%\is{I would mention that $i$ here is the imaginary unit. It's obvious when you know the stuff, but better write this.}
 $a=(a_z,a_x)\in \ZZ_2^n\times \ZZ_2^n$, {$a_z\cdot a_x=\sum_j (a_z)_j(a_x)_j$}, and 
 %\is{here we have a collision of notation, $i$ was used for imaginary unit}, and
$$
Z(a_z) = Z^{(a_z)_1}\otimes \cdots \otimes Z^{(a_z)_n}  \;\;\;\;\;\; X(a_x) = X^{(a_x)_1}\otimes \cdots \otimes X^{(a_x)_n}.
$$
Note that $T_a^2=\one$ for all $a\in \ZZ_2^{2n}$.
The Pauli group fits in a central extension
$$
0\to \ZZ_4\xrightarrow{1\mapsto i\one}  P_n \xrightarrow{\epsilon} \ZZ_2^{2n} \to 0
$$
A set-theoretic {section} of $\epsilon$ can be defined by
\begin{equation}\label{eq:eta Ta}
\eta(a) =  T_a.
\end{equation}
}

%$$
%A_1\otimes A_2\otimes \cdots \otimes A_n
%$$
%where $A_i\in \set{\one, X,Y,Z}$ and 
%$$
%X = \left(\begin{matrix}
%0 & 1 \\
%1 & 0
%\end{matrix} \right)\;\;\;\;
%Z = \left(\begin{matrix}
%1 & 0 \\
%0 & -1
%\end{matrix} \right)
%$$
We will study the partial group extension
$$
N\ZZ_2 \to N(\ZZ_2,P_n) \xrightarrow{\pi} \bar N(\ZZ_2,P_n)
$$

  {
Now, the set-theoretic section in Equation (\ref{eq:eta Ta}) can \iis{be} used to define a pseudo-section of $\pi$, which we still denote by $\eta$. Using $\eta$ and Equation (\ref{eq:cocycle from eta}) we can compute
$$
\begin{aligned}
(-1)^{\beta(a,b)} \one &= \eta(a)\cdot \eta(b)\cdot \eta(a+b)^{-1} \\
&= T_a T_b (T_{a+b})^{-1} \\
&= (i^{a_z\cdot a_x} Z(a_z)X(a_x))(i^{b_z\cdot b_x} Z(b_z)X(b_x))(i^{(a+b)_z\cdot (a+b)_x} Z((a+b)_z)X((a+b)_x)) \\
&= i^{a_z\cdot a_x+b_z\cdot b_x+(a+b)_z\cdot (a+b)_x} (-1)^{b_z\cdot a_x+(a+b)_z\cdot (a+b)_x} \\
&= i^{b_z\cdot a_x - a_z\cdot b_x} \\
&= (-1)^{(b_z\cdot a_x - a_z\cdot b_x)/2} 
\end{aligned}
$$
where we used the fact that $b_z\cdot a_x - a_z\cdot b_x$ is divisible by $2$ since $\omega(a,b)=0$. Therefore 
\begin{equation}\label{eq:beta for Pn}
\beta(a,b)=(b_z\cdot a_x - a_z\cdot b_x)/2.
\end{equation}
%\is{Left parenthese is missing}
 Similarly we can compute $\Phi$ using Equation (\ref{eq:phi from eta}).  
}
 
\begin{figure}[h!]
\centering
\begin{subfigure}{.49\textwidth}
  \centering
  \includegraphics[width=.85\linewidth]{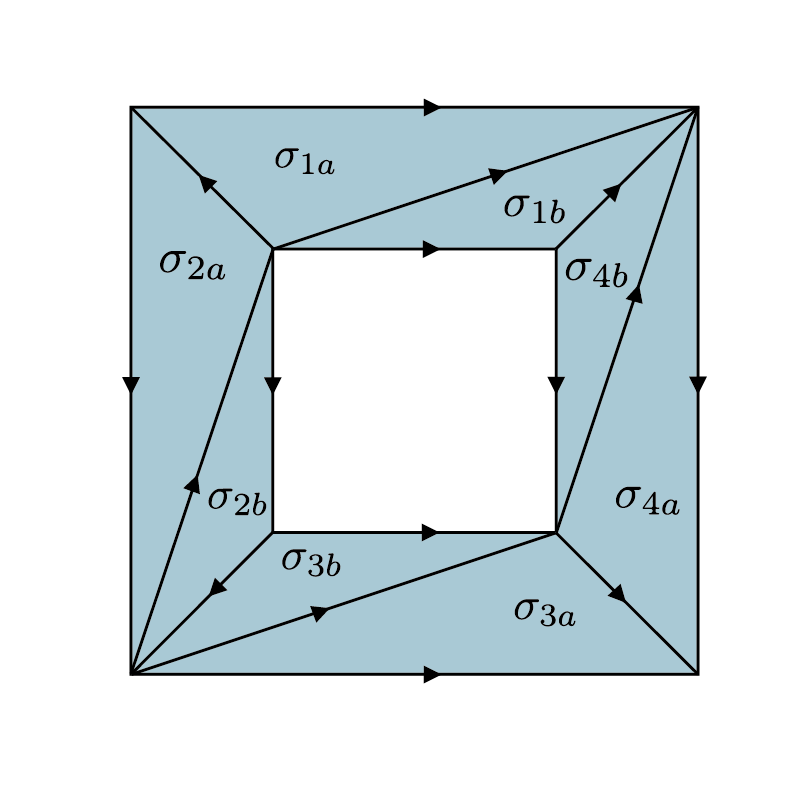}
  \caption{}
  \label{fig:mermin}
\end{subfigure}%
\begin{subfigure}{.49\textwidth}
  \centering
  \includegraphics[width=.85\linewidth]{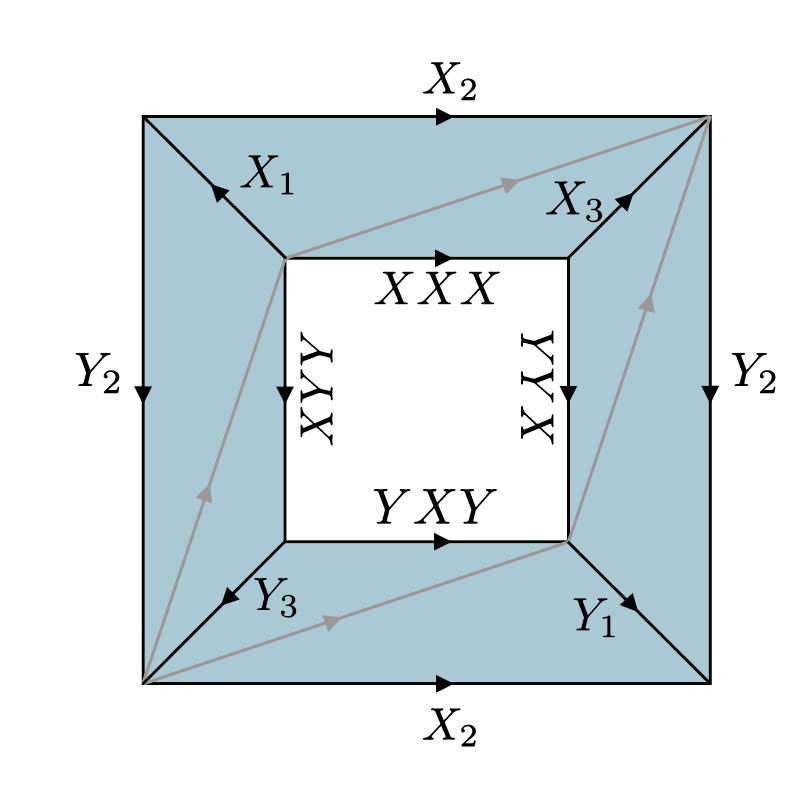}
  \caption{}
  \label{fig:mermin-operators}
\end{subfigure}

\caption{(a)The Mermin star represented by a reduced simplicial set $M$ consisting of eight $2$-simplices $\sigma_{ai},\sigma_{bi}$ where $i=1,2,3,4$. The right (top) and left (bottom) edges are identified.  
(b) Edges are labeled by elements of $P_3$ modulo the subgroup $\Span{-\one}$. {For simplicity we omit the tensor product from notation, e.g., $XXX$ stands for $X\otimes X\otimes X$, and indicate the qubit on which the operator is acting by a subscript, e.g., $X_1$ stands for $X\otimes \one\otimes \one$.}
%This specifies a simplicial set map $\bar \Pi:M\to \bar N(\ZZ_2,P_3)$.
}
\label{fig:mermin-scenario-and-os}
\end{figure} 

We begin by constructing a smaller partial group extension and a cofibration that comes with a comparison map. For the partial group extension first we consider the $2$-skeleton $N\ZZ_2^{(2)}$ and the partial monoid $M$ given by the space in Figure (\ref{fig:mermin}). Let $\bar \Pi:M\to \bar N(\ZZ_2,P_n)$ denote the inclusion described by Figure (\ref{fig:mermin-operators}) and $E$ denote the $2$-skeleton of the pull-back of $\pi$ along this maps.
% inclusion $\bar \Pi:M\to \bar N(\ZZ_2,P_n)$ as described by Figure (\ref{fig:mermin-operators}) \is{I think that some explicit description of $E$ would be useful here. I cannot see what is it, and why $X$ is its subspace.}
{The space $E$ consists of the non-degenerate $2$-simplices given by $\sigma_{ai}(cd)$ and $\sigma_{bi}(cd)$ where $c,d\in \ZZ_2$, i.e., there are four copies of triangles above a given triangle of $M$. Instead of providing a picture of $E$ w}e will consider a simplicial subset $X\subset E$ which is described in Figure (\ref{fig:small-mermin}). {This portion focuses on the restriction of $E$ over the two triangles $\sigma_{1b}$ and $\sigma_{4b}$ of $M$.}  
Using all these spaces we obtain a commutative diagram
$$
\begin{tikzcd}[column sep=huge,row sep =large]
  S^1 \arrow[d,"i"] \arrow[r,hook] & N\ZZ_2^{(2)} \arrow[d] \arrow[r,hook] & N\ZZ_2 \arrow[d] \\
 X \arrow[d,"q"] \arrow[r,"\alpha",hook] & E \arrow[d,"\pi"] \arrow[r,hook,"\Pi"] & N(\ZZ_2,P_3) \arrow[d] \\
 \bar X \arrow[r,"\bar \alpha"]   & M \arrow[r,hook,"\bar \Pi"] & \bar N(\ZZ_2,P_3)
\end{tikzcd}
$$ 
where $\Pi|_X = \Pi\circ \alpha$ is described in Figure (\ref{fig:small-mermin-operators}). Under $i$ the circle $S^1$ maps to the common edge {$x$} between {$\sigma_2$ and $\sigma_3$}.

 \begin{figure}[h!]
\centering
\begin{subfigure}{.49\textwidth}
  \centering
  \includegraphics[width=.7\linewidth]{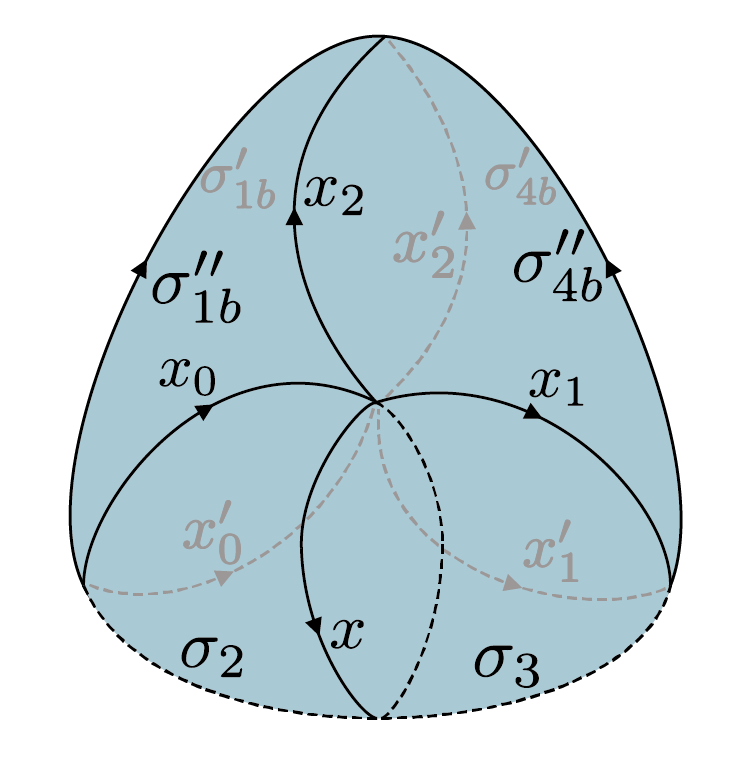}
  \caption{}
  \label{fig:small-mermin}
\end{subfigure}%
\begin{subfigure}{.49\textwidth}
  \centering
  \includegraphics[width=.7\linewidth]{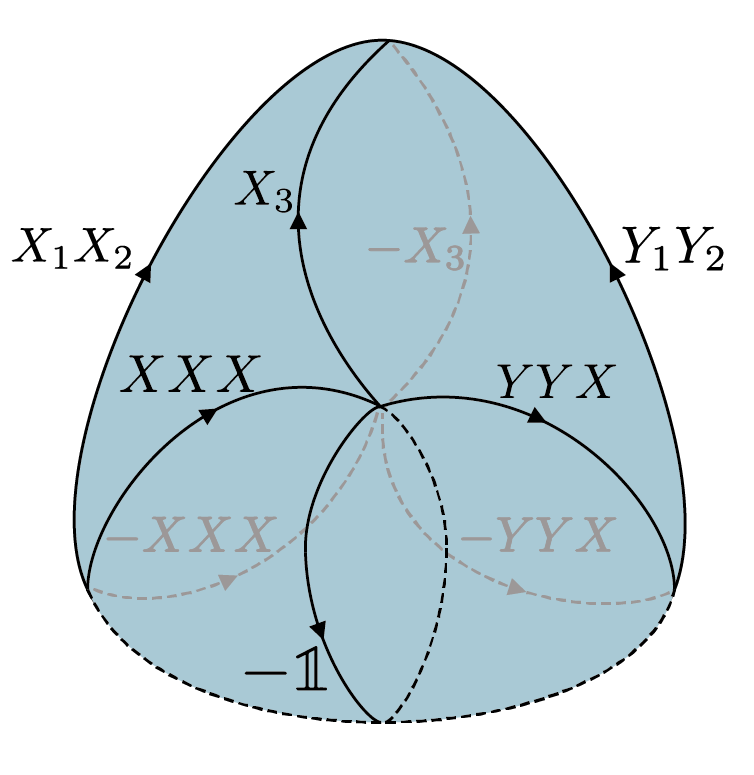}
  \caption{}
  \label{fig:small-mermin-operators}
\end{subfigure}

\caption{(a) The reduced simplicial set $X$ consisting of six $2$-simplices {$\sigma'_{ib},\sigma''_{ib},\sigma_{2},\sigma_3$} where $i=1,4$.
(b) Edges are labeled by elements of $P_3$. This specifies a simplicial set map $\Pi|_X:X\to N(\ZZ_2,P_3)$. 
%\comm{$\gamma_{ia}$ notation might be confused with the cocycle $\gamma$}
}
\label{fig:mermin-scenario-and-os}
\end{figure} 
 
Next we describe the symmetry group $G$. {This example is a modified version of the one discussed in the context of measurement-based quantum computing; see \cite{raussendorf2016cohomological}.}
%\is{I was wondering here if it wouldn't be useful to repeat references to TopPar (specific place) and Robert's MBQC paper, where this action is thoroughly discussed.}
Let us define the following unitary operator
$$
A = \frac{X + Y}{\sqrt{2}}.
$$ 
Note that $A^2=\one$ and we have
$$
A X A = Y,\;\;\;\; AYA = X, \;\;\;\; AZA = -Z.
$$ 
Therefore $A$ is in the normalizer of $P_1\subset U(\CC^2)$. Consider the symmetry group
$$
G = \Span{A\otimes A \otimes Y, A \otimes Y\otimes A, Y\otimes A \otimes A}.
$$ 
Note that $G$ acts on $N(\ZZ_2,P_3)$ and furthermore this action restricts to an action on $E${, which is regarded as a simplicial subset via the inclusion $\Pi$}. Consider the following subgroup
$$ 
H  = \Span{V} \subset G.
$$
where $V= A\otimes A \otimes Y$.
We have
$$
\begin{aligned}
V (X\otimes X\otimes \one) V &= Y\otimes Y\otimes \one \\
V (X\otimes X\otimes X) V &= -Y\otimes Y\otimes X \\
V (\one\otimes \one\otimes X) V &= -\one\otimes \one\otimes X.  
\end{aligned}
$$
Hence  $H$ acts on the smaller simplicial subset $X$ {(embedded via $\Pi|_X$)}.  

Finally we consider a $G$-equivariant quantum state $\rho\in \Den(\hH)$ given by
{
\begin{align}
  \begin{split}
 \rho = \frac{1}{4}(\one + \one\otimes Z\otimes Z &+ Z\otimes \one\otimes Z+Z\otimes Z\otimes \one \\
 &+ X\otimes X\otimes X - X\otimes Y\otimes Y - Y\otimes X\otimes Y - Y\otimes Y\otimes X).
  \end{split}
\end{align}
Note that $\rho$ can be written as the outer product $vv^\dagger$ where $v=(e_0\otimes e_0\otimes e_0 + e_1\otimes e_1\otimes e_1)/\sqrt{2}$ and $\set{e_0,e_1}$ is the canonical basis of $\CC^2$.
}
%$$
%|v\rangle = \frac{|000\rangle +|111\rangle}{\sqrt{2}},  
%$$ 
%that is $\rho = |v\rangle\langle v|$. \comm{would be good to write this in the Pauli basis} 
In physics literature the state $v$ is called the {Greenberger--Horn--Zeilinger (GHZ) state \cite{nielsen2010quantum}}. Observe that
$$
U \rho U^\dagger = \rho
$$
for all $U\in G$. Therefore we obtain a $G$-equivariant distribution using the simplicial Born rule {(see (\ref{dia:simplicial Born}))}
$$
\rho_*: N(\ZZ_2,P_3) \to D(N\ZZ_2)
$$
which we can restrict to $E$ and $X$ to obtain the $G$-equivariant distributions $\rho_*|_E$ and $\rho_*|_X$, respectively.

First we begin with $X$ and the $H$-equivariant distribution $p_X=\rho_*|_X$.
%The action of $V= A\otimes A \otimes Y$ is given by
%$$
%\begin{aligned}
%V (X\otimes X\otimes \one) V &= Y\otimes Y\otimes \one \\
%V (X\otimes X\otimes X) V &= -Y\otimes Y\otimes X \\
%V (\one\otimes \one\otimes X) V &= -\one\otimes \one\otimes X.  
%\end{aligned}
%$$
It turns out that $X$ does not admit any $H$-equivariant deterministic distribution. To see this note that such a distribution $d$ is determined by the assignment {$d_{d_2\sigma''_{1b}} =\delta^a$} and {$d_{d_0\sigma''_{1b}}=\delta^b$}. Then the $2$-simplex {$\sigma_2$} forces that {$d_{d_2\sigma'_{1b}}=\delta^{1-a}$} since one of the edges is given by $\delta^1$.
In addition, $H$-equivariance implies that {$d_{d_0\sigma'_{1b}}=\delta^b$}. Then we obtain a contradiction
$$
\delta^{a+b} = d_{d_1{\sigma''_{1b}}} = d_{d_1{\sigma'_{1b}}} = \delta^{1+a+b}.
$$  
By Corollary \ref{cor:beta-G non zero rho contextual} this implies that $[\gamma_G]\neq 0$.
On the other hand, we observe that $[\gamma]=0$ since the cocycle $\gamma$ is non-zero on two of the non-degenerate $2$-simplices {$\sigma_2$ and $\sigma_3$}. {Let us define
$$
s(x') = \left\lbrace
\begin{array}{ll}
1 & x' \in \set{x_i:\, i=0,1,2}\\
0 & \text{otherwise.}
\end{array}
\right.
$$
Then using $\phi=-d^hs$ we obtain
$$
\phi(g,x') = \left\lbrace
\begin{array}{ll}
1 & g=1,\; x' \in \set{x_i,x_i':\,i=0,1,2}\\
0 & \text{otherwise.}
\end{array}
\right.
$$ 
Let us consider the partial group extension $E$. We can compute $\beta$, as we did for the general case; see Equation (\ref{eq:beta for Pn}). Since in each triangle the product of operators on the boundary is $\one$ the cocycle $\beta$ is zero. Let us compute $\Phi$ for the group element $V$:
$$
(-1)^{\Phi_g(x')} = g\cdot \eta(g\cdot x')  \eta(x') = \left\lbrace
\begin{array}{ll}
-1 & g=V, \;x' \in \set{x_i,x_i':\, i=0,1,2} \\
1 & \text{otherwise.}
\end{array}
\right.
$$ 
Therefore $\phi = \bar\alpha^*(\Phi)$.  }
  
%\comm{I will continue on this section...}
%\is{I would state somewhere the goal of this example, i.e., that Mermin star exhibits $G$-equivariant contextuality. I think it should be in the introduction to the subsection.}

\bibliography{bib.bib}
\bibliographystyle{ieeetr}

\end{document}